\newcommand{\subparagraph}{}
\titlespacing{\section}{0pt}{1ex}{1ex}
\titlespacing{\subsection}{0pt}{1ex}{1ex}
\newtheorem{lemma}{Lemma}
\newtheorem{remark}{Remark}
\newtheorem{assumption}{Assumption}
\newcommand{\ie}{\textit{i}.\textit{e}.}
\title{\LARGE \bf
Policy Gradient-based Model Free Optimal LQG Control with a Probabilistic Risk Constraint}
\author{Arunava Naha and Subhrakanti Dey
	\thanks{*This work is supported by The Swedish Research Council under grants 2017-04053.}
	\thanks{Arunava Naha and Subhrakanti Dey are with the Department of Electrical Engineering, Uppsala University, 75103 Uppsala, Sweden.
		{\tt\small e-mail: arunava.naha@angstrom.uu.se and subhrakanti.dey@angstrom.uu.se}}%
}
\begin{document}
\maketitle
\thispagestyle{titlepage}

\pagestyle{empty}

\begin{abstract}
In this paper, we investigate a model-free optimal control design that minimizes an infinite horizon average expected quadratic cost of states and control actions subject to a probabilistic risk or chance constraint using input-output data. In particular, we consider linear time-invariant systems and design an optimal controller within the class of linear state feedback control. Three different policy gradient (PG) based algorithms, natural policy gradient (NPG), Gauss-Newton policy gradient (GNPG), and deep deterministic policy gradient (DDPG), are developed and compared with the optimal risk-neutral linear-quadratic regulator (LQR), chance constrained LQR, and a scenario-based model predictive control (MPC) technique via numerical simulations. The convergence properties and the accuracy of all the algorithms are compared numerically. We also establish analytical convergence properties of the NPG and GNPG algorithms under the known model scenario, while the proof of convergence for the unknown model scenario is part of our ongoing work.

\end{abstract}

\section{Introduction} \label{sec:intro}
The linear quadratic regulator (LQR) problem has been extensively studied in the literature, and the optimal controller is known to be  a linear function of states \cite{bertsekasDynamicProgrammingOptimal}. However, the LQR formulation is risk neutral, \ie, it does not consider the risk or chance of occurrence of undesirable events. Such risky or undesirable events may occur due to the long tail of the process noise or disturbance. For a lot of practical control problems, it is desirable to avoid such risky events. For example, sometimes it is important for an unmanned aerial vehicle (UAV) to avoid flying over a certain area to hide from adversaries. Therefore, it is crucial to design a controller that minimizes the risk of such events along with minimizing the average expected control cost \cite{tsiamisRiskConstrainedLinearQuadraticRegulators2020}. Consider the wind turbine control problem, where wind speed introduces uncertainty, and the control aim is to optimize power output while mitigating structural damage risk \cite{schildbachScenarioApproachStochastic2014}. Similarly, in climate-controlled buildings, the objective is to minimize energy usage while ensuring occupant comfort and mitigating the risk of temperature exceeding set thresholds \cite{flemingStochasticMPCAdditive2019}. As studied in \cite{flemingStochasticMPCAdditive2019}, controllers designed with hard constraints are pessimistic compared to the ones designed with softer probabilistic constraints. In other words, the designed controller will lower the control cost if we constrain the probability of risky or undesirable events instead of imposing hard constraints. For example, in the case of the wind turbine control problem, the system will produce more power if we constrain the probability of the stress level on the blades increasing a prespecified limit instead of imposing a hard constraint on the stress level. 
\subsection{Related Work} \label{subsec:related_work}
In the model predictive control (MPC) literature, a probabilistic risk is generally modeled as a chance constraint. A popular approach is to draw samples or scenarios from the distribution of the disturbance and convert the probabilistic chance constraint into an algebraic one \cite{flemingStochasticMPCAdditive2019,schildbachScenarioApproachStochastic2014}. The probabilistic chance constraint is also handled by replacing it with an expected value using the Hamiltonian Monte Carlo (HMC) method or employing Chebyshev's inequality. In \cite{schildbach2015linear}, the chance constrained LQR problem in transformed into a convex optimization problem with linear matrix inequality (LMI) constraints and solved using semi-difinite programming (SDP). In a different approach, the chance constraints are formulated as the probability that the state and input values remain within certain sets, also called tubes \cite{arcariStochasticMPCRobustness2023}. Such tube-based chance constraint control is also studied under the model-free scenario in \cite{kerzDataDrivenTubeBasedStochastic2023}. Other than the MPC-based approaches, the occurrence of risky or undesirable events is also limited by constraining the average variance over an infinite time horizon of a quadratic function of states in \cite{zhaoGlobalConvergencePolicy2022,zhaoPrimaldualLearningModelfree2021}. The optimal controller under such risk formulation is proved to be an affine function of the states  \cite{zhaoInfinitehorizonRiskconstrainedLinear2021,tsiamisRiskConstrainedLinearQuadraticRegulators2020}.  

Reinforcement learning (RL) based techniques have performed remarkably for optimal decision-making problems, where the underlying system is partially or entirely unknown \cite{bertsekasReinforcementLearningOptimal2019, busoniuReinforcementLearningControl2018,lopezEfficientOffPolicyQLearning2023}. Policy gradient (PG) based actor-critic (AC) methods, a class of RL algorithms, are suitable for stochastic optimal control problems with continuous state and action spaces \cite{suttonReinforcementLearningSecond2018}. PG-based algorithms are also applied for the standard LQR problem, and their performance in terms of closed-loop stability and convergence are studied in the literature \cite{fazelGlobalConvergencePolicy2018a,huTheoreticalFoundationPolicy2023,yangProvablyGlobalConvergence2019}. It has been shown that although the optimization landscape is not convex with respect to the linear control gain in these problems, global convergence can be guaranteed when the PG algorithm is initialized with a stabilizing controller. The closed-loop stability and convergence analysis of PG-based algorithms for the LQR problem with additional constraints is challenging and has only been studied for a few specific cases. For example, in \cite{zhangPolicyOptimizationMathcal2021a}, the constraint is $H_\infty$ robustness constraint. The global convergence of a PG algorithm is studied for the risk-constrained LQR in \cite{zhaoGlobalConvergencePolicy2023a}, where the risk is modeled as an average variance of a quadratic function of states over an infinite time horizon. On the other hand, in \cite{hanReinforcementLearningControl2021}, the safety constraint is modeled as the expected value of a continuous non-negative function of the states being within a specified threshold, and the optimal controller is derived using an AC algorithm for a Markov decision process (MDP). In \cite{naha2023reinforcement}, a deep deterministic policy gradient (DDPG) based AC method is studied for the probabilistic risk-constrained LQR problem, where the risk is modeled as the probability that a quadratic function of the states crosses a user-defined limit.  However, analyzing the performance of PG-based AC algorithms for probabilistic risk-constrained LQR problems in general remains an open research area.

\subsection{Our Approach and Contributions} \label{subsec:contributions}
We have studied the performance of three PG-based AC algorithms, natural policy gradient (NPG) \cite{rajeswaranGeneralizationSimplicityContinuous2017,kakadeNaturalPolicyGradient2001}, Gauss-Newton policy gradient (GNPG) \cite{fazelGlobalConvergencePolicy2018a}, and deep deterministic policy gradient (DDPG) \cite{lillicrapContinuousControlDeep2016}, for the probabilistic risk- or chance-constraint LQR problem under the unknown model scenario. We have investigated the optimal policy within the class of linear state feedback controls, where a Lagrangian-based primal-dual formulation is used to handle the constraint. Furthermore, we have proved that the optimization problem under study enjoys coercivity and gradient dominance properties, and the NPG and GNPG algorithms converge to the global optimum under the known model assumption. The coercivity and L-smoothness properties also ensure that intermediate policies will maintain closed-loop stability while training, provided we start from an arbitrary stabilizing controller. The theoretical study on the convergence of the DDPG algorithm and the convergence properties of the NPG and GNPG algorithms under the unknown model scenario (where sample-based estimates of relevant quantities are used in the PG update) is left for our future publication.

We evaluate PG-based AC policies against risk-neutral LQR, a model-based chance constrained LQR and scenario-based MPC through simulations. As anticipated, PG-based AC policies effectively reduce risky events, albeit with a slight increase in quadratic cost compared to standard LQR. On the other hand, the model-based chance constrained LQR and the MPC performed comparably to the proposed model-free PG-based methods. However, both are model based approaches, and the effectiveness of MPC depends heavily on the time horizon length chosen, increasing computational complexity. Unlike MPC, model-free PG-based methods do not require real-time optimization at each step, relying solely on a feedforward actor-network post-training, significantly reducing computational overhead compared to MPC-based methods. 

{\color{black} 
Furthermore, we have modeled the risk as the probability that a function of the one-step ahead future state crosses a user-defined limit, and the risk is constrained by keeping the average expected violation probability over an infinite time horizon within a user-defined limit. We have used the indicator function in the reward structure to replace the probability when the system model is unknown. As discussed in the Related work section, the probability of the risky events is also limited indirectly by modeling the risk as an average variance of a quadratic function of states over an infinite time horizon \cite{zhaoGlobalConvergencePolicy2023a} or an expected value of a continuous non-negative function of the states \cite{hanReinforcementLearningControl2021}. Such formulations yield a closed-form analytical expression of the constraint function. On the other hand, our constraint model directly puts a bound on the probability of risky events. Furthermore, in general, such a probabilistic constraint does not give a closed-form analytical expression of the constraint function, which makes the convergence proofs more challenging and requires substantial new analysis.  

We can summarize our main contributions as follows. 
\begin{enumerate}
	\item To the best of our knowledge, the convergence property of NPG and GNPG is studied for the first time for the probabilistic risk or chance-constrained LQR problem, even for the known model scenario. While their convergence has been studied for the risk-neutral LQR formulation in existing literature, extending the convergence study to the chance constraint LQR requires significant new analysis.
	\item We have studied a Lagrangian-based primal-dual formulation to handle the constraint and proved that there is no duality gap.
	\item We have performed numerical simulation-based comparative analyses of NPG, GNPG, DDPG, MPC, and standard LQR for the probabilistic risk-constrained LQR problem. 
\end{enumerate}}

\subsection{Organization} \label{subsec:organization}
The rest of the paper is organized as follows. In Section \ref{sec:problem}, we present the problem formulation and the reward structure for the probabilistic risk or chance constrained LQR problem. In Section \ref{sec:ac_method}, we present the NPG, GNPG, and DDPG algorithms, while the convergence properties of NPG and GNPG are studied in Section~\ref{sec:main_results}. In Section \ref{sec:results}, we present the numerical results and compare the performance of the PG-based AC algorithms with the risk-neutral LQR and the scenario-based MPC. Finally, we conclude the paper in Section \ref{sec:conclusion}.

\subsection{Notations}
\label{subsec:notations}
Some special notations are given in Table~\ref{tab:notations}. 
\begin{table}[h!]
	\begin{center}
		\caption{Notations}
		\label{tab:notations}
		\begin{tabular}{l|l} 
			\hline \hline
			Symbol & Description \\
			\hline
			${\rm I\!R}^{n}$ & The set of $n\times 1$ real vectors \\
			${\rm I\!R}^{m\times n}$ & The set of $m\times n$ real matrices \\
			${X}^T$ & Transpose of matrix or vector ${X}$ \\
			$\mathcal{N}(\mu,{\bf \Sigma})$ & Gaussian distribution with mean $\mu$ and variance $\bf \Sigma$  \\
			$\bf \Sigma \ge 0$ or $>0$  & $\bf \Sigma$ is positive semi-definite or definite matrix, respectively \\
			$[\cdot]_{ij}$ & $i$-th row and $j$-th column element of a matrix \\
			$||\cdot||$ & Frobenius norm of a matrix of Euclidean norm of a vector \\
			$\text{tr}(\cdot)$ & Trace of a matrix \\
			$\text{E}\left[\cdot\right]$ and $\text{P}\left\{\cdot \right\}$ & Expectation operator and Probability measure respectively \\
            $\left\{\hat \cdot \right\}$ & Estimated or approximated value \\
			$\mathbbm{1}_{\left\{ condition \right\}}$ & Indicator function, 1 if condition is true, 0 otherwise \\
            $\sigma(\cdot)$ and $\rho(\cdot)$ & Singular value and eigenvalue of a matrix, respectively \\
			\hline \hline 
		\end{tabular}
	\end{center} \vspace{-4mm}
\end{table}

\section{Problem Formulation} \label{sec:problem}
We consider the following linear time-invariant (LTI) system:
\begin{equation}
	{\bf{x}}_{k+1}={\bf A}{\bf{x}}_{k}+{\bf B}{\bf{u}}_{k}+{\bf{w}}_{k}.
	\label{eqn:state_eqn}
\end{equation}
\vspace{-2mm}Here ${\bf{x}}_{k}\in {\rm I\!R}^{n}$ and ${\bf{u}}_{k}\in {\rm I\!R}^{p}$ are the state and input vectors at the $k$-th time instant respectively, whereas ${\bf{w}}_{k} \in {\rm I\!R}^{n}$ is an independent and identically distributed  (iid) process noise with distribution $f_w({\bf w})$. ${\bf{A}}\in {\rm I\!R}^{n\times n}$, ${\bf{B}}\in {\rm I\!R}^{n\times p}$. 

We assume that all the states are measured and the system $({\bf A},{\bf B})$ is stabilizable. In the standard LQR problem, the following cost function is minimized. 
\begin{align}
J=\lim_{T\to\infty} \frac{1}{T} \sum_{k=1}^T \text{E}\left[ {\bf x}^T_k{\bf Q}{\bf x}_k+{\bf u}^T_k{\bf R}{\bf u}_k\right],
	\label{eqn:cost_fun_J}
\end{align}
where ${\bf{Q}}\in {\rm I\!R}^{n\times n}$ and ${\bf{R}}\in {\rm I\!R}^{p\times p}$ are positive definite weight matrices. We also assume that $({\bf A},{{\bf Q}^{1/2}})$ is detectable. If the noise is zero mean and the second-order moment of the noise is bounded, then the optimum input appears as a fixed gain linear control signal \cite{bertsekasDynamicProgrammingOptimal}, see (\ref{eqn:opt_u}).
\begin{align}
	{\bf u}^*_k={\bf K}{ {\bf x}}_{k} \text{, and }  
	{\bf K}=-\left( {\bf B}^T{\bf S}{\bf B}+{\bf R}\right)^{-1}{\bf B}^T{\bf S}{\bf A} \text{,} \label{eqn:opt_u}
\end{align}
where $\bf S$ is the solution to the following algebraic Riccati equation, ${\bf S}={\bf A}^T{\bf S}{\bf A}+{\bf Q}-{\bf A}^T{\bf S}{\bf B}\left({\bf B}^T{\bf S}{\bf B}+{\bf R}\right)^{-1}{\bf B}^T{\bf S}{\bf A}.$
However, as discussed before, the cost formulation (\ref{eqn:cost_fun_J}) does not take into account the less frequent but risky events. Therefore,  we use an additional constraint on the probability of risky or undesirable events, and the optimization problem takes the following form. 
\begin{equation}
    \begin{aligned}
        & \underset{{\bf u} \in \mathcal{U}}{\text{minimize}}
        & & J \\
        & \text{subject to}
        & & J_c \leq \delta , \label{eqn:opt_prob}
    \end{aligned}
\end{equation}
where $J$ is same as given in (\ref{eqn:cost_fun_J}), and $J_c$ is given as follows.
\begin{equation}
    \begin{aligned}
        J_c = \lim_{T\to\infty} \frac{1}{T} \sum_{k=1}^T E\left[P\left\{f_c\left({\bf x}_{k+1} \right)\ge \epsilon \mid \Psi_{k} \right\}  \right]. \label{eqn:J_c}
    \end{aligned}
\end{equation}
Here, $\epsilon >0$ and $\delta >0$ are user selected parameters. $\Psi_{k} \triangleq \left \{{\bf x}_{l}, {\bf u}_{l} \mid k \ge l \ge 0 \right \}$ denotes the set of all information up to the instant $k$. 
\begin{remark}
	The risky or undesirable event is modeled as the function $f_c(\cdot)$ of the state at the next time step crossing a threshold $\epsilon$, and we are interested in limiting the probability of these events. Since such probability itself is a function of the random information set $\Psi_{k}$, we have taken the expectation with respect to this set in the above formulation. Furthermore, we are interested in keeping the long-term average probability bounded over an infinite time horizon.
\end{remark}
\subsection{Reward Structure} \label{subsec:reward}
The constrained optimization of (\ref{eqn:opt_prob}) can be converted into an unconstrained stochastic control problem using the Lagrangian multiplier $\lambda$ as follows, \vspace{-3.5mm}
\begin{equation} 
    \begin{aligned}
     \underset{{\bf u} \in \mathcal{U}}{\text{min }}
	 \mathcal{L} = J + \lambda \left(J_c - \delta \right) = \lim_{T\to\infty} \frac{1}{T} \sum_{k=1}^T \text{E}\left[g\left( {\bf x}_k,{\bf u}_k\right) \right],
	\label{eqn:JL}
    \end{aligned}  
\end{equation}
where the per stage cost $g(\cdot)$ takes the following form, 
\begin{align}
	&g\left( {\bf x}_k,{\bf u}_k\right) = f\left( {\bf x}_k,{\bf u}_k\right) +\lambda \left( h_p\left( {\bf x}_k,{\bf u}_k\right) - \delta \right) \text{, where} \label{eqn:gk} \\
	&f\left( {\bf x}_k,{\bf u}_k\right) = {\bf x}^T_k{\bf Q}{\bf x}_k+{\bf u}^T_k{\bf R}{\bf u}_k,  \label{eqn:fk} \\
	& h_p\left( {\bf x}_k,{\bf u}_k\right) = P\left\{f_c\left({\bf x}_{k+1} \right)\ge \epsilon \mid \Psi_{k} \right\}.  \label{eqn:hpk} 
\end{align}
Note that the per-stage cost function may generally contain an intractable probabilistic constraint. Therefore, for the RL-based algorithms, where we have access to the future states ($\bf x_{k+1}$) in the form of stored data, the reward is defined as 
\begin{align}
    &r_k = -f\left( {\bf x}_k,{\bf u}_k\right) -\lambda \left( h_r\left( {\bf x}_{k+1}\right) - \delta \right), \text{ where}\label{eqn:rk} \\
    &h_r\left( {\bf x}_{k+1}\right) = \mathds{1}_{\left\{f_c\left({\bf x}_{k+1} \right)\ge \epsilon \right\}}. \label{eqn:hrk}
\end{align}
Here, $\mathds{1}_{\left\{\cdot \right\}}$ is the indicator function, which takes the value $1$ if the condition inside the bracket is true, and $0$, otherwise. In the following section, we will briefly introduce the PG-based AC algorithms used in our study.
\section{PG-based AC Algorithms under study} \label{sec:ac_method}
In this section, we will present NPG, GNPG, and DDPG algorithms, the three PG-based AC algorithms used in our study. For the NPG and GNPG algorithms, we assume the policy to be stochastic but stationary, denoted by ${\bf u}_k \sim \pi_{\theta}(\cdot|\bf x_k)$, where $\theta$ is the policy parameter. The policy is deterministic for the DDPG algorithm, denoted by ${\bf u}_k = \mu_{\theta}(\bf x_k)$, where $\theta$ is the policy parameter. We will use the general notation $p(\bf x_k)$ to denote the stochastic or deterministic policy. In general, the policy parameter is updated using the gradient of the expected return, \ie, $\mathcal{R}$, see (\ref{eqn:R}), with respect to the policy parameter. 
\begin{equation}
    \mathcal{R} = \lim_{T\rightarrow \infty} \frac{1}{T} \text{E}\left[\sum_{k=1}^T r_k \right].
    \label{eqn:R}
\end{equation}
Furthermore, the value function (\ref{eqn:value_fn}), the Q function (\ref{eqn:Q_fn}) and the advantage function (\ref{eqn:Adv_fn}) under a policy $p(\cdot)$ are defined as follows. We have used ($\hat{\cdot}$) notation to indicate an estimated or approximated quantity. Note that even though the reward in (\ref{eqn:rk}) is a function of the future state ${\bf x}_{k+1}$, for the known model case, we can write the reward as a function of the current state ${\bf x}_k$ and the control input ${\bf u}_k$ using (\ref{eqn:state_eqn}).  \vspace{-3.0mm}
\begin{align}
    &V^{p}({\bf x}_k) = \lim_{T\rightarrow \infty}\sum_{i=k}^T\left\{\text{E}\left[ r_i \mid {\bf x}_k \right]- \mathcal{R}\right\}, \label{eqn:value_fn} \\
    &Q^{p}({\bf x}_k,{\bf u}_k) = \text{E}\left[ r_k + V^{p}({\bf x}_{k+1}) \mid {\bf x}_k,{\bf u}_k \right]. \label{eqn:Q_fn} \\
    &A^{p}({\bf x}_k,{\bf u}_k) = Q^{p}({\bf x}_k,{\bf u}_k) - V^{p}({\bf x}_k). \label{eqn:Adv_fn}
\end{align}
\subsection{Natural Policy Gradient (NPG) based AC algorithm} \label{subsec:npg} 
We have adopted the NPG-based AC algorithm from \cite{rajeswaranGeneralizationSimplicityContinuous2017}; see Algorithm~\ref{algo:NPG}. NPG methods utilize the Fisher information matrix, $F$, to obtain the steepest ascent direction as $F^{-1}G$, where $G$ is the gradient of the expected return, $\mathcal{R}$ with respect to the policy parameter $\theta$. In practice, $G$ is estimated using the policy gradient theorem \cite{rajeswaranGeneralizationSimplicityContinuous2017} from the data as follows, \vspace{-3.5mm}
\begin{equation}
    \hat G = \frac{1}{N}\sum_{k=1}^N \hat{A}({\bf x}_k,{\bf u}_k) \nabla_{\theta} \log \pi_{\theta}({\bf u}_k|{\bf x}_k).
    \label{eqn:G}
\end{equation}
Here, $\hat{A}(\cdot)$ is the estimated advantage function. Similarly, $F$ is estimated as follows \cite{rajeswaranGeneralizationSimplicityContinuous2017},
\begin{equation}
    \hat F = \frac{1}{N}\sum_{k=1}^N \nabla_{\theta} \log \pi_{\theta}({\bf u}_k|{\bf x}_k) \nabla_{\theta} \log \pi_{\theta}({\bf u}_k|{\bf x}_k)^T.
    \label{eqn:F}
\end{equation} 
The step size for the policy parameter update is evaluated in the same way as \cite{rajeswaranGeneralizationSimplicityContinuous2017}, ensuring the policy parameter update is not too large, see Algorithm~\ref{algo:NPG}. Furthermore, the advantage value is estimated using the method provided in \cite{schulmanHighDimensionalContinuousControl2018} as follows, \vspace{-2.5mm}
\begin{equation}
    \begin{aligned}
    \hat{A}({\bf x}_k,{\bf u}_k) = \sum_{l=0}^{T-1}(\gamma \eta)^l d_{k+l},
    \label{eqn:Adv} \text{ where} \\
    d_{k+l} = -{\hat V}_{\phi}({\bf x}_{k+l}) + r_{k+l} + \eta {\hat V}_{\phi}({\bf x}_{k+l+1}). 
    \end{aligned}
\end{equation}
Here, $0<\eta <1$ and $0<\gamma<1$. ${\hat V}_{\phi}(\cdot)$ denotes the value obtained from the critic network parameterized by $\phi$. The value function parameter $\phi$ is updated using the steepest descent direction as $\hat H^{-1} s$, where $s$ and $\hat H$ are evaluated as
\begin{align}
    & s = \nabla_{\phi} \left( \frac{1}{N}\sum_{k=1}^N\mid \mid {\hat V}_{\phi}({\bf x}_k) - {\hat V}_k \mid \mid^2\right), \text{ and} \label{eqn:s} \\
    & \hat H = \frac{1}{N}\sum_{k=1}^N \mathcal{J}_k \mathcal{J}_k^T, \text{ where } \mathcal{J}_k = \nabla_{\phi}{\hat V}_{\phi}({\bf x}_k).\label{eqn:H} 
\end{align}
\begin{equation}
    \text{The target value ${\hat V}_k$ is evaluated as }{\hat V}_k = \sum_{l=0}^{T-1} (\gamma )^l r_{k+l}.
    \label{eqn:Vk}
\end{equation}   
Similar to the policy parameter update, the step size for the critic parameter update is also obtained in such a way that the update is not too large, see Algorithm~\ref{algo:NPG}.        
\begin{algorithm}[h!]
    \caption{NPG-based AC Algorithm}
    \label{algo:NPG}
    \begin{algorithmic}[1]
        \State Initialize policy parameter $\theta_0$ and the value function parameter $\phi_0$.
        \State Set $\gamma$, $\eta$, $\alpha_a$, $\alpha_c$.
        \State Set Number of timestep data used for advantage and value evaluations $N$, and Number of trajectories $M$.
        \For{$i=1,2,...$}
            \For{$j=1$ to $M$}
                \State Generate a trajectory $\left\{ {\bf x}_k,{\bf u}_k,r_k \right\}_{k=1}^T$ using the policy $\pi_{\theta_i}(\cdot|\bf x_k)$.
                \State Compute $d_k$ using (\ref{eqn:Adv}) 
                \For {$k=1$ to $N$}
                    \State Compute $\hat{A}({\bf x}_k,{\bf u}_k)$ (\ref{eqn:Adv}) and $\nabla_{\theta} \log \pi_{\theta_i}({\bf u}_k|{\bf x}_k)$. 
                    \State Compute ${\hat V}_k$ (\ref{eqn:Vk}) and $\mathcal{J}_k$ (\ref{eqn:H}). 
                \EndFor
                \State Compute ${\hat G}_j$ and ${\hat F}_j$ using (\ref{eqn:G}) and (\ref{eqn:F}), respectively.
                \State Compute ${\hat H}_j$ and $s_j$ using (\ref{eqn:H}) and (\ref{eqn:s}), respectively. 
            \EndFor
            \State Compute ${\hat G} = \frac{1}{M}\sum_{j=1}^M {\hat G}_j$ and ${\hat F} = \frac{1}{M}\sum_{j=1}^M {\hat F}_j$.
            
            \State Update the policy parameter $\theta_{i+1} \leftarrow \theta_i + \sqrt \frac{\alpha_a}{{\hat G}^T{\hat F}^{-1} {\hat G}}  {\hat F}^{-1} {\hat G}$.
            \State Compute ${\hat H} = \frac{1}{M}\sum_{j=1}^M {\hat H}_j$ and $\hat s = \frac{1}{M}\sum_{j=1}^M {s}_j$.
            \State Update the value function parameter $\phi_{i+1} \leftarrow \phi_i + \sqrt \frac{\alpha_c}{{\hat s}^T{\hat H}^{-1} {\hat s}}  {\hat H}^{-1} {\hat s}$.
        \EndFor
    \end{algorithmic}
\end{algorithm}


\subsection{Gauss-Newton Policy Gradient (GNPG)} \label{subsec:gnpg}
GNPG algorithm is a variant of the NPG algorithm, where the estimated Fisher information matrix $\hat F$ is replaced by the estimated Gauss-Newton matrix $\hat H_a$, see (\ref{eqn:Ha}), otherwise all the other steps are the same as Algorithm~\ref{algo:NPG}, \cite{rajeswaranGeneralizationSimplicityContinuous2017,schulmanHighDimensionalContinuousControl2018}. \vspace{-3.0mm}
\begin{equation}
    \begin{aligned}
    {\hat H}_{a,j} = \frac{1}{N}\sum_{k=1}^N g_{a,j} g_{a,j}^T, \text{ where}   \\
    g_{a,j} = \hat{A}({\bf x}_k,{\bf u}_k) \nabla \log \pi_{\theta_i}({\bf u}_k|{\bf x}_k).
    \label{eqn:Ha}
    \end{aligned}
\end{equation}

\subsection{Deep Deterministic Policy Gradient (DDPG)} \label{subsec:ddpg}
The DDPG-based AC method  is based on the Algorithm~\ref{algo:DDPG} \cite{lillicrapContinuousControlDeep2016,naha2023reinforcement}. The actor and critic networks are parameterized by $\theta$ and $\phi$, respectively. The actor takes states as input and outputs control inputs, while the critic takes states and control inputs, providing a Q value for that state-action pair. In DDPG, there are two separate target networks, $Q^t$ and $\mu^t$, for the critic and actor, respectively. These target networks facilitate stable learning by offering consistent targets and are updated gradually to track the main networks, as described in Algorithm~\ref{algo:DDPG}.

\begin{algorithm}[h!]
    \caption{DDPG-based AC Algorithm}
    \label{algo:DDPG}
    \begin{algorithmic}[1]
        \State Set $\tau$, learning rates $\alpha_d$, initial and final variances of zero mean Gaussian noise for exploration ($\mathcal N_t$), \ie, $\Sigma_{D,0}$ and $\Sigma_{D,F}$.  
        \State Set Number of timestep data used for loss evaluation $N$, and Number of episodes $M$.   
        \State Initialize $\theta_0$ and $\phi_0$, and $\theta_0^t \leftarrow \theta_0$ and $\phi_0^t \leftarrow \phi_0$. 
        \State Initialise the replay buffer $\mathcal D$.  
        
        \For{episode = 1, M}
            \State Receive initial observation state ${\bf x}_1$.
            \For{k = 1, T}
                \State Select action ${\bf u}_k = \mu_\theta({\bf x}_k) + \mathcal N_k$ [$\mathcal N_k$ is zero mean Gaussian noise].
                \State Execute action ${\bf u}_k$ and observe reward $r_k$ and observe new state ${\bf x}_{k+1}$.
                \State Store transition $\left({\bf x}_k, {\bf u}_k, r_k, {\bf x}_{k+1} \right)$ in $\mathcal D$.

                \State Sample a random minibatch of $N$ transitions $\left({\bf x}_i, {\bf u}_i, r_i, {\bf x}_{i+1} \right)$ from $\mathcal D$.
                
                \State Set $y_i = r_i + \gamma {\hat Q}_{\phi^t}^t \left( {\bf x}_{i+1}, \mu_{\theta^t}^t \left( {\bf x}_{i+1} \right)   \right)$.
                
                \State Update critic by minimizing the loss: $L = \frac{1}{N} \sum_i \left( y_i - {\hat Q}_{\phi} \left( {\bf x}_i, {\bf u}_i \right) \right)^2.$                
                \State Update the actor policy using the sampled policy gradient as \newline $\nabla_{\theta} \mathcal{R} \approx \frac{1}{N} \sum_i \nabla_u {\hat Q}_{\theta} \left( {\bf x}, {\bf u} \right)\mid_{{\bf x} = {\bf x}_i, {\bf u} = \mu_{\theta}({\bf x}_i)} \nabla_{\theta} \mu_{\theta} \left({\bf x}  \right)\mid_{{\bf x} = {\bf x}_i}$
                \State Update the target networks as \newline 
                $\theta^{t} \leftarrow \tau \theta + (1-\tau) \theta^{t}\ \text{and}\  \phi^{t} \leftarrow \tau \phi + (1-\tau) \phi^{t}$
            \EndFor   
            \State Reduce the variance of Gaussian noise for exploration until it reaches its final value 
        \EndFor 
    \end{algorithmic}
\end{algorithm} 
In the following section, we  present some analytical results on the convergence properties of the NPG and GNPG algorithms.
\section{Analytical Results} \label{sec:main_results} 
We investigate the two fundamental properties for the theoretical analysis of PG-based AC methods for the optimal controller design problem given by (\ref{eqn:opt_prob}). The first property is the convergence of the AC algorithm to a local or global optimum policy and the corresponding convergence rate, while the second property concerns the closed-loop stability of the system during the training process. We investigate the convergence properties and stability aspects of the NPG and GNPG algorithms under the known model scenario and for the linear state feedback control. The study for more general cases, such as unknown model scenarios (where only sampled based estimates of the relevant quantities are available) and the convergence analysis of the DDPG algorithm for the probabilistic risk constrained control problem, is part of ongoing work.  

We assume the policy to have the following form, \vspace{-2.0mm}
\begin{equation}
    \pi_{K}(\cdot |{\bf x}) = \mathcal{N}\left(-{\bf K}{\bf x}, \Sigma_{\sigma} \right), \label{eqn:policy}
\end{equation} 
where $\mathcal{N}(\cdot,\cdot)$ denotes the Gaussian distribution. Furthermore, $\bf K$ is a trainable parameter, and $\Sigma_{\sigma}$ is a fixed covariance matrix. Additionally, for theoretical analysis, we assume zero-mean Gaussian process noise, i.e., ${\bf w}_k \sim \mathcal{N}\left({\bf 0}, \Sigma_{w} \right)$. We anticipate that analogous theoretical outcomes can be derived for Gaussian mixture process noise, although that is a part of our ongoing research.  Finally, the control input at $k$-th time instant can be written as, \vspace{-5.0mm}
\begin{equation}
    {\bf u}_k = -{\bf K}{\bf x}_k + \sigma_k \text{, } \sigma_k \sim \mathcal{N}\left(0, \Sigma_{\sigma} \right). \label{eqn:u}
\end{equation}
We also define the set of all stabilizing linear state feedback controllers as $\mathcal{K} \triangleq \left\{ {\bf K} \mid \rho\left({\bf A}-{\bf B}{\bf K}\right) < 1 \right\}$, where $\rho(\cdot)$ denotes the spectral radius. Under the policy (\ref{eqn:u}), the closed-loop system dynamics can be written as,
\begin{align}
    {\bf x}_{k+1} &= \left({\bf A}-{\bf B}{\bf K}\right){\bf x}_k + { \bf \bar w}_k, \text{where}  \label{eqn:cl_sys} \\
    {\bf \bar w}_k &={\bf w}_k + {\bf B}\sigma_k. \label{eqn:w_bar}
\end{align}
Here, ${\bf \bar w}_k \sim \mathcal{N}\left(0, \Sigma_{\bar w} \right)$, where $\Sigma_{\bar w} = \Sigma_{w} + {\bf B}\Sigma_{\sigma}{\bf B}^T$, which can be derived directly using (\ref{eqn:state_eqn}) and (\ref{eqn:u}). Additionally, we assume the following function for the constraint,
\begin{align}
    f_c\left({\bf x}_{k+1} \right) = {\bf q}^T{\bf x}_{k+1}, \label{eqn:fq}
\end{align} 
where ${\bf q} \in {\rm I\!R}^{n}$ is a user defined vector.

Before discussing our theoretical results, we rewrite the Lagrangian function from (\ref{eqn:JL}) using the control input given by (\ref{eqn:u}) as follows. \vspace{-3.0mm}
\begin{align}
    \mathcal{L}(\bf K, \lambda) &= J(\bf K) + \lambda \left(J_c(\bf K) - \delta \right) \text{, where}  \label{eqn:LK} \\
    J(\bf K) &= \text{tr}\left( \left(\bf Q + {\bf K}^T{\bf R}{\bf K}  \right)\Sigma_K + {\bf R}\Sigma_{\sigma}  \right) \label{eqn:JK1} \\
    &= \text{tr}\left({\bf P}_K \Sigma_{\bar w} +  {\bf R}\Sigma_{\sigma} \right) \label{eqn:JK2} \text{, and} \\
    J_c(\bf K) &= \text{E}\left[Q\left(a({\bf x}_k, {\bf K})   \right)\right] \label{eqn:JcK} \text{, where} \\
    Q(a) &= \frac{1}{\sqrt{2\pi}}\int_{a}^{\infty} e^{-\frac{z^2}{2}}dz \label{eqn:Qa} \text{, and} \\
    a({\bf x}_k, {\bf K}) &= \frac{\epsilon - {\bf q}^T({\bf A} - {\bf B}{\bf K}){\bf x}_k}{\sqrt{{\bf q}^T\Sigma_{\bar w}{\bf q} } } \label{eqn:a}
\end{align}
If ${\bf K} \in \mathcal{K}$, then $\Sigma_K$ and ${\bf P}_K$ are the unique solutions to the Lyapunov equations given in (\ref{eqn:Sigma_K}) and (\ref{eqn:P_K}), respectively.
\begin{align}
    \Sigma_K &= \Sigma_{\bar w} + ({\bf A} - {\bf B}{\bf K})\Sigma_K({\bf A} - {\bf B}{\bf K})^T \label{eqn:Sigma_K} \text{, and} \\
    {\bf P}_K &= {\bf Q} + {\bf K}^T{\bf R}{\bf K} + ({\bf A} - {\bf B}{\bf K})^T{\bf P}_K({\bf A} - {\bf B}{\bf K}) \label{eqn:P_K}
\end{align}
\begin{remark}
    The derivations of (\ref{eqn:JK1}), (\ref{eqn:JK2}), (\ref{eqn:Sigma_K}) and (\ref{eqn:P_K}) are available in \cite{yangProvablyGlobalConvergence2019}. It is straightforward to derive (\ref{eqn:JcK}), (\ref{eqn:Qa}) and (\ref{eqn:a}) using (\ref{eqn:cl_sys})-(\ref{eqn:fq}) in (\ref{eqn:J_c}) and considering the states, $\left\{\bf x_k \right\}$ to be ergodic.
\end{remark}

Our first result is the following lemma, which states the coercivity property of the Lagrangian function $\mathcal{L}(\bf K, \lambda)$ given by (\ref{eqn:LK}).
\begin{lemma}
    For a fixed $\lambda > 0$, the Lagrangian function $\mathcal{L}(\bf K, \lambda)$ given by (\ref{eqn:LK}) is coercive on $\mathcal K$ in the sense that $\mathcal{L}(\bf K, \lambda) \rightarrow \infty$ as $\bf K \rightarrow \delta \mathcal{K}$, where $\delta \mathcal{K}$ denotes the boundary of $\mathcal K$. 
    \label{lemma:coercive}  
\end{lemma}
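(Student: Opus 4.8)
The plan is to decompose the Lagrangian as $\mathcal{L}({\bf K},\lambda) = J({\bf K}) + \lambda\bigl(J_c({\bf K}) - \delta\bigr)$ and to argue that the entire coercive behaviour is carried by the quadratic term $J({\bf K})$, because the constraint term is \emph{uniformly bounded} in ${\bf K}$. The crucial observation is that $J_c({\bf K})$ in (\ref{eqn:JcK}) is the expectation of the Gaussian tail function $Q(a)$ from (\ref{eqn:Qa}), and since $Q(a) \in [0,1]$ for every real argument $a$, the constraint functional satisfies $0 \le J_c({\bf K}) \le 1$ for all ${\bf K} \in \mathcal{K}$. Consequently $\lambda\bigl(J_c({\bf K}) - \delta\bigr) \in [-\lambda\delta,\,\lambda(1-\delta)]$ is bounded, so that $\mathcal{L}({\bf K},\lambda) \ge J({\bf K}) - \lambda\delta$, and it suffices to establish $J({\bf K}) \to \infty$ as ${\bf K} \to \partial\mathcal{K}$.

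For the coercivity of $J$ I would use the representation $J({\bf K}) = \text{tr}\bigl(({\bf Q}+{\bf K}^T{\bf R}{\bf K})\Sigma_K\bigr) + \text{tr}({\bf R}\Sigma_\sigma)$ from (\ref{eqn:JK1}) together with the Lyapunov identity (\ref{eqn:Sigma_K}). Approaching $\partial\mathcal{K}$ can happen in two ways, and I would treat both. First, when $\rho({\bf A}-{\bf B}{\bf K}) \to 1^{-}$ with ${\bf K}$ bounded: unrolling (\ref{eqn:Sigma_K}) gives $\Sigma_K = \sum_{i\ge 0}({\bf A}-{\bf B}{\bf K})^i\Sigma_{\bar w}\bigl(({\bf A}-{\bf B}{\bf K})^T\bigr)^i \succeq \sigma_{\min}(\Sigma_{\bar w})\sum_{i\ge 0}({\bf A}-{\bf B}{\bf K})^i\bigl(({\bf A}-{\bf B}{\bf K})^i\bigr)^T$, and taking traces with $\|({\bf A}-{\bf B}{\bf K})^i\|^2 \ge \rho({\bf A}-{\bf B}{\bf K})^{2i}$ yields $\text{tr}(\Sigma_K) \ge \sigma_{\min}(\Sigma_{\bar w})/\bigl(1-\rho({\bf A}-{\bf B}{\bf K})^2\bigr)$, which diverges. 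Since ${\bf Q}>0$, $J({\bf K}) \ge \sigma_{\min}({\bf Q})\,\text{tr}(\Sigma_K) \to \infty$. Here I rely on $\Sigma_{\bar w} = \Sigma_w + {\bf B}\Sigma_\sigma{\bf B}^T > 0$, which holds under the mild assumption $\Sigma_w > 0$.

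Second, I would cover the escape-to-infinity direction $\|{\bf K}\| \to \infty$ (along which $\rho$ may stay below $1$). From (\ref{eqn:Sigma_K}) we have $\Sigma_K \succeq \Sigma_{\bar w} \succeq \sigma_{\min}(\Sigma_{\bar w})\,{\bf I}$, so the control penalty alone forces $J({\bf K}) \ge \text{tr}({\bf K}^T{\bf R}{\bf K}\Sigma_K) \ge \sigma_{\min}(\Sigma_{\bar w})\,\sigma_{\min}({\bf R})\,\|{\bf K}\|^2 \to \infty$, using that $\text{tr}(MN)\ge \sigma_{\min}(N)\,\text{tr}(M)$ for $M\succeq 0$. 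Combining the two cases shows that $J$, and hence $\mathcal{L}$, tends to infinity along every sequence approaching the boundary of $\mathcal{K}$.

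The individual inequalities above are routine; the one conceptual point — and the feature that distinguishes this setting from the variance-constrained formulations of \cite{zhaoGlobalConvergencePolicy2023a} — is the first step, namely recognising that the probabilistic constraint $J_c$ is automatically bounded and therefore can neither create nor destroy coercivity. The main obstacle I anticipate is essentially bookkeeping: making precise what ``${\bf K}\to\partial\mathcal{K}$'' means, i.e. verifying that any sequence leaving $\mathcal{K}$ either drives $\rho({\bf A}-{\bf B}{\bf K})\to 1$ or has $\|{\bf K}\|\to\infty$, so that the two cases treated above are genuinely exhaustive.
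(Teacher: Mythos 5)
Your proof takes exactly the paper's route: the paper's own one-line argument likewise decomposes $\mathcal{L}({\bf K},\lambda)$ and observes that the constraint term is harmless because $0 \le J_c(\cdot) \le 1$, reducing everything to the coercivity of $J(\cdot)$ on $\mathcal K$, which it cites from \cite{huTheoreticalFoundationPolicy2023}. Your only departure is that you prove that cited ingredient from scratch — the two boundary cases $\rho({\bf A}-{\bf B}{\bf K})\to 1^{-}$ and $\|{\bf K}\|\to\infty$ via the Lyapunov representation of $\Sigma_K$ — which is correct and makes the lemma self-contained, at the modest cost of explicitly invoking $\Sigma_{\bar w} > 0$ (e.g.\ via $\Sigma_w > 0$), an assumption you rightly flag and which the cited result also needs.
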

\begin{proof}
    The proof follows from the fact that the cost function $J(\cdot)$ is coercive on $\mathcal K$, see \cite{huTheoreticalFoundationPolicy2023}, and the constraint function $0 \le J_c(\cdot) \le 1$ is bounded.
\end{proof}
\begin{remark}
    The coercivity property of $\mathcal{L}(\bf K, \lambda)$ is crucial to ensure the stability of the closed-loop system during the training process. In other words, the coercive function $\mathcal{L}(\bf K, \lambda)$ serves as a barrier function over the stable policy set $\mathcal K$, and no additional measure is required to ensure the stability of the closed-loop system during the training process.
\end{remark}
To demonstrate the convergence of the NPG and GNPG algorithms to a local or global optimum, it is imperative to establish the gradient dominance property of the Lagrangian function $\mathcal{L}(\mathbf{K}, \lambda)$ (\ref{eqn:LK}). Furthermore, the following two lemmas are required to support this property.
\begin{lemma}
    For a given $\lambda > 0$, the Lagrangian function $\mathcal{L}(\bf K, \lambda)$ given by (\ref{eqn:LK}) is twice continuously differentiable over $\mathcal K$.
    \label{lemma:twice_diff}
\end{lemma}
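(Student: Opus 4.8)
The plan is to exploit the additive structure $\mathcal{L}({\bf K}, \lambda) = J({\bf K}) + \lambda\left(J_c({\bf K}) - \delta\right)$: since $\lambda$ and $\delta$ are fixed constants, it suffices to show that each of $J(\cdot)$ and $J_c(\cdot)$ is twice continuously differentiable on $\mathcal{K}$ and then invoke closure of $C^2$ under linear combinations. The backbone of the whole argument is the smooth dependence of the discrete Lyapunov solutions ${\bf P}_K$ and $\Sigma_K$ on ${\bf K}$, which I would establish once and reuse for both terms.

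For $J({\bf K})$, I would vectorize the Lyapunov equation (\ref{eqn:P_K}). Writing ${\bf A}_K \triangleq {\bf A} - {\bf B}{\bf K}$, one obtains $\text{vec}({\bf P}_K) = \left(I - {\bf A}_K^T \otimes {\bf A}_K^T\right)^{-1}\text{vec}({\bf Q} + {\bf K}^T{\bf R}{\bf K})$. For ${\bf K} \in \mathcal{K}$ the matrix ${\bf A}_K$ is Schur, so the eigenvalues $1 - \rho_i({\bf A}_K)\rho_j({\bf A}_K)$ of $I - {\bf A}_K^T\otimes {\bf A}_K^T$ are all nonzero and the inverse exists on all of $\mathcal{K}$. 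The map ${\bf K}\mapsto {\bf A}_K$ is affine, ${\bf K}\mapsto {\bf K}^T{\bf R}{\bf K}$ is polynomial, and Kronecker product together with matrix inversion (on the open set where the argument is invertible) are $C^\infty$; hence ${\bf P}_K$ is $C^\infty$ in ${\bf K}$. Since $J({\bf K}) = \text{tr}({\bf P}_K\Sigma_{\bar w} + {\bf R}\Sigma_\sigma)$ is a linear functional of ${\bf P}_K$, it is $C^\infty$, a fortiori $C^2$.

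For $J_c({\bf K})$, I would first reduce the conditional-probability form (\ref{eqn:JcK}) to a scalar closed form. By the tower property and stationarity, $J_c({\bf K}) = \text{E}\!\left[\text{P}\{{\bf q}^T{\bf x}_{k+1} \geq \epsilon \mid {\bf x}_k\}\right] = \text{P}\{{\bf q}^T{\bf x}_{k+1} \geq \epsilon\}$, and in steady state ${\bf x}_{k+1} \sim \mathcal{N}({\bf 0}, \Sigma_K)$, so ${\bf q}^T{\bf x}_{k+1} \sim \mathcal{N}(0, {\bf q}^T\Sigma_K{\bf q})$ and therefore $J_c({\bf K}) = Q\!\left(\epsilon / \sqrt{{\bf q}^T\Sigma_K{\bf q}}\right)$. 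The covariance $\Sigma_K$ is $C^\infty$ in ${\bf K}$ by the same Kronecker/inversion argument applied to (\ref{eqn:Sigma_K}); the scalar ${\bf q}^T\Sigma_K{\bf q}$ is thus $C^\infty$, and because (\ref{eqn:Sigma_K}) gives $\Sigma_K \succeq \Sigma_{\bar w}$ it is bounded below by ${\bf q}^T\Sigma_{\bar w}{\bf q}$, which is strictly positive as it already appears in the denominator of (\ref{eqn:a}). The tail function $Q(\cdot)$ is $C^\infty$ with $Q'(a) = -\tfrac{1}{\sqrt{2\pi}}e^{-a^2/2}$, and $t \mapsto \epsilon/\sqrt{t}$ is $C^\infty$ on $(0,\infty)$; composing these $C^\infty$ maps yields $J_c \in C^\infty(\mathcal{K}) \subset C^2(\mathcal{K})$.

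I expect the main obstacle to be the reduction of $J_c$ rather than the differentiability bookkeeping: recognizing that the expectation in (\ref{eqn:JcK}), whose integrand \emph{and} underlying Gaussian measure both depend on ${\bf K}$, collapses via the tower property to a scalar function of the single smooth quantity ${\bf q}^T\Sigma_K{\bf q}$ is precisely what sidesteps any need to differentiate under an integral with a ${\bf K}$-dependent density. Once that closed form is in hand, the remaining work is the standard but essential verification that the Lyapunov solution map is smooth and that its argument stays uniformly bounded away from $0$ on $\mathcal{K}$, so that none of the compositions encounters a singularity in the interior of the stabilizing set.
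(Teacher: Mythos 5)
Your proof is correct, and it takes a genuinely different — and in one respect more careful — route than the paper's. The paper disposes of $J(\cdot)$ by citation to \cite{huTheoreticalFoundationPolicy2023}; your vectorized-Lyapunov argument, $\mathrm{vec}({\bf P}_K)=\left(I-{\bf A}_K^T\otimes{\bf A}_K^T\right)^{-1}\mathrm{vec}({\bf Q}+{\bf K}^T{\bf R}{\bf K})$ with $1-\rho_i({\bf A}_K)\rho_j({\bf A}_K)\neq 0$ on $\mathcal{K}$, is essentially the standard proof behind that citation, so there you are merely self-contained. The real divergence is in $J_c$: the paper argues that $Q$ is analytic and $a({\bf x}_k,{\bf K})$ is affine in ${\bf K}$, and concludes that $J_c({\bf K})=\mathrm{E}\left[Q(a({\bf x}_k,{\bf K}))\right]$ is analytic — an argument that silently passes derivatives through an expectation whose underlying measure, the stationary law $\mathcal{N}({\bf 0},\Sigma_K)$ of ${\bf x}_k$, itself depends on ${\bf K}$, which strictly requires either a domination argument or a closed form. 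Your tower-property reduction supplies exactly the latter: integrating ${\bf x}_k$ out and using the Lyapunov identity ${\bf q}^T\left[({\bf A}-{\bf B}{\bf K})\Sigma_K({\bf A}-{\bf B}{\bf K})^T+\Sigma_{\bar w}\right]{\bf q}={\bf q}^T\Sigma_K{\bf q}$ collapses (\ref{eqn:JcK}) to $J_c({\bf K})=Q\left(\epsilon/\sqrt{{\bf q}^T\Sigma_K{\bf q}}\right)$, after which smoothness is a pure composition statement, with the denominator bounded below by ${\bf q}^T\Sigma_{\bar w}{\bf q}>0$ — the same nondegeneracy the paper already needs implicitly for (\ref{eqn:a}) to be well defined. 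What each approach buys: the paper's route is shorter and stays at the level of generality of the expectation form (\ref{eqn:JcK}); yours closes the differentiation-under-the-integral gap and in fact yields more than claimed, since by Cramer's rule $\Sigma_K$ is a rational, hence real-analytic, function of ${\bf K}$ on $\mathcal{K}$, so your closed form shows $J_c$ and hence $\mathcal{L}(\cdot,\lambda)$ are real-analytic, not merely $C^2$. The one hypothesis you should flag explicitly is that identifying the law of ${\bf x}_{k+1}$ with $\mathcal{N}({\bf 0},\Sigma_K)$ uses stationarity/ergodicity of the closed-loop chain — but this is precisely the assumption the paper itself invokes (Remark 2) to write (\ref{eqn:JcK}) in the first place, so nothing extra is being assumed.
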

\begin{proof}
    The cost function $J(\cdot)$ is twice continuously differentiable over $\mathcal K$, see \cite{huTheoreticalFoundationPolicy2023}. Since the exponential function is analytic, $Q(a)$ is also analytic in $a$. Finally, $a({\bf x}_k, {\bf K})$ is an affine function of $\bf K$, so we can say $J_c(\bf K)$ is an analytic function of ${\bf K} \in {\mathcal K}$, and hence $\mathcal{L}(\bf K, \lambda)$ is at least twice continuously differentiable with respect to $\bf K$ over $\mathcal K$.
\end{proof}
\begin{lemma}
    For a given $\lambda > 0$, the Lagrangian function $\mathcal{L}(\bf K, \lambda)$ given by (\ref{eqn:LK}) is $L$-smooth on $\mathcal K_\zeta$, where $L >0$ is a constant and depends on the problem parameters and $\zeta$. Here, $\mathcal K_\zeta \triangleq \left\{ {\bf K} \in \mathcal{K} \mid \mathcal{L}(\bf K, \lambda) \le \zeta \right\}$ is a compact subset.
    \label{lemma:Lsmooth}
\end{lemma}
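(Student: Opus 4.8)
The plan is to obtain $L$-smoothness as a Lipschitz bound on the gradient $\nabla \mathcal{L}(\cdot, \lambda)$, which I would derive from a uniform bound on the Hessian over a suitable compact region. The two ingredients are already available: Lemma~\ref{lemma:twice_diff} gives that $\mathcal{L}(\cdot, \lambda)$ is $C^2$ on the open set $\mathcal{K}$, so its Hessian $\nabla^2 \mathcal{L}$ is continuous there, and Lemma~\ref{lemma:coercive} (coercivity) will be used to show that $\mathcal{K}_\zeta$ is a compact subset of $\mathcal{K}$. A continuous Hessian attains a finite maximum operator norm on a compact set, and this maximum will serve as the Lipschitz constant $L$.

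First I would establish that $\mathcal{K}_\zeta$ is compact. Since $0 \le J_c(\mathbf{K}) \le 1$ by construction, the identity $\mathcal{L} = J + \lambda(J_c - \delta)$ yields that $\mathcal{L}(\mathbf{K}, \lambda) \le \zeta$ implies $J(\mathbf{K}) \le \zeta + \lambda\delta$, so $\mathcal{K}_\zeta \subseteq \{\mathbf{K} \in \mathcal{K} : J(\mathbf{K}) \le \zeta + \lambda\delta\} =: \mathcal{K}_J$. The set $\mathcal{K}_J$ is a compact subset of $\mathcal{K}$ by the coercivity of $J$ on $\mathcal{K}$ (the same fact used in Lemma~\ref{lemma:coercive}; see \cite{huTheoreticalFoundationPolicy2023}). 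As $\mathcal{K}_\zeta$ is closed, being a sublevel set of the continuous function $\mathcal{L}$, and is contained in the compact $\mathcal{K}_J$, it is itself compact. Moreover, being a compact subset of the open set $\mathcal{K}$, it is separated from the boundary, i.e. $d_0 := \mathrm{dist}(\mathcal{K}_\zeta, \delta\mathcal{K}) > 0$.

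Next I would fatten $\mathcal{K}_\zeta$ to $\mathcal{K}_\zeta^+ := \{\mathbf{K} : \mathrm{dist}(\mathbf{K}, \mathcal{K}_\zeta) \le d_0/2\}$, which is compact and still contained in $\mathcal{K}$. By Lemma~\ref{lemma:twice_diff} the Hessian is continuous on $\mathcal{K}$, so it is bounded on the compact $\mathcal{K}_\zeta^+$, and I set $L := \sup_{\mathbf{K} \in \mathcal{K}_\zeta^+} \|\nabla^2 \mathcal{L}(\mathbf{K}, \lambda)\| < \infty$. This $L$ depends on the problem data $(\mathbf{A}, \mathbf{B}, \mathbf{Q}, \mathbf{R}, \mathbf{q}, \epsilon, \lambda, \Sigma_{\bar w})$ through the form of $\mathcal{L}$ and on $\zeta$ through the size of $\mathcal{K}_\zeta^+$, as claimed.

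Finally, I would convert the Hessian bound into the Lipschitz-gradient estimate. For $\mathbf{K}_1, \mathbf{K}_2 \in \mathcal{K}_\zeta$ with $\|\mathbf{K}_1 - \mathbf{K}_2\| \le d_0/2$, the segment $[\mathbf{K}_1, \mathbf{K}_2]$ lies in $\mathcal{K}_\zeta^+$, so integrating $\nabla^2 \mathcal{L}$ along the segment gives $\|\nabla \mathcal{L}(\mathbf{K}_1, \lambda) - \nabla \mathcal{L}(\mathbf{K}_2, \lambda)\| \le L \|\mathbf{K}_1 - \mathbf{K}_2\|$; pairs that are farther apart are handled by enlarging $L$ using the bound on $\|\nabla \mathcal{L}\|$ over the compact $\mathcal{K}_\zeta$. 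I expect the main obstacle to be the potential non-convexity of $\mathcal{K}_\zeta$: the straight segment between two feasible gains may briefly leave the sublevel set, which is precisely why the argument bounds the Hessian on the fattened neighborhood $\mathcal{K}_\zeta^+$ (guaranteed to lie in $\mathcal{K}$ by the positive separation $d_0$) rather than on $\mathcal{K}_\zeta$ alone. A secondary point is the continuity of $\nabla^2 J_c$, but this is already supplied by the analyticity of $J_c$ established in the proof of Lemma~\ref{lemma:twice_diff}.
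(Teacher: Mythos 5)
Your proof is correct, but it takes a different route in presentation: the paper disposes of this lemma in one line by plugging Lemma~\ref{lemma:coercive} (coercivity) and Lemma~\ref{lemma:twice_diff} ($C^2$ regularity) into Theorem~1 of \cite{huTheoreticalFoundationPolicy2023}, which is a packaged result stating that a coercive, twice continuously differentiable function on the set of stabilizing gains is $L$-smooth on each of its sublevel sets. What you have done is reconstruct, self-contained, essentially the proof of that cited theorem: compactness of the sublevel set, a Hessian bound on a fattened neighborhood, and integration along segments. Your version buys three things the paper leaves implicit. First, you actually justify the compactness of $\mathcal{K}_\zeta$, which the paper's lemma statement merely asserts; note that the paper's Lemma~\ref{lemma:coercive} only records blow-up as ${\bf K} \rightarrow \delta\mathcal{K}$, which by itself does not rule out an unbounded sublevel set, and your detour through $J({\bf K}) \le \zeta + \lambda\delta$ (valid since $J_c \ge 0$) together with the radial coercivity of $J$ from \cite{huTheoreticalFoundationPolicy2023} closes exactly that gap. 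Second, your fattening to $\mathcal{K}_\zeta^+$ with $d_0 = \mathrm{dist}(\mathcal{K}_\zeta, \delta\mathcal{K}) > 0$ correctly handles the non-convexity of the sublevel set, which is the one genuinely delicate point in such arguments and is needed later when the descent inequality is applied along the segment from ${\bf K}$ to ${\bf K}'$ in the proof of the convergence-rate lemma. Third, the far-apart case handled by enlarging $L$ via the gradient bound over $\mathcal{K}_\zeta$ is the standard patch and is stated correctly. The trade-off is the obvious one: the paper's citation is shorter and inherits an explicit dependence of $L$ on problem parameters from \cite{huTheoreticalFoundationPolicy2023}, while your argument only exhibits $L$ as a finite supremum; if one wanted the quantitative dependence advertised in the lemma statement, one would still need the explicit Hessian estimates from the reference.
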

\begin{proof}
    Using Lemma~\ref{lemma:coercive} and Lemma~\ref{lemma:twice_diff} in Theorem~1 from \cite{huTheoreticalFoundationPolicy2023}, we can directly state Lemma~\ref{lemma:Lsmooth}.
\end{proof}
\begin{remark}
    The $L$-smoothness of $\mathcal{L}(\bf K, \lambda)$ as stated in Lemma~\ref{lemma:Lsmooth} also means 
    \begin{align}
         \mid \mid \nabla_K \mathcal{L}({\bf K}, \lambda)  \mid \mid^2 \le L\text{ } \forall \bf K \in \mathcal 
         K_\zeta.
    \end{align}
$||\cdot||$ denotes the Frobenius norm for a matrix or the Euclidean norm for a vector.
\end{remark}
To establish a linear convergence rate for the NPG and GNPG algorithms, we need the $L$-smoothness and gradient dominance properties of the Lagrangian function $\mathcal{L}(\bf K, \lambda)$. The following lemma states the gradient dominance property of $\mathcal{L}(\bf K, \lambda)$.
\begin{lemma}
    (Gradient dominance) For a given $\lambda > 0$, the Lagrangian function $\mathcal{L}(\bf K, \lambda)$ given by (\ref{eqn:LK}) satisfies the following inequality,
    \begin{align}
        \mathcal{L}({\bf K}, \lambda) - \mathcal{L}({\bf K}^*, \lambda) \le \mu \mid \mid \nabla_K \mathcal{L}({\bf K}, \lambda)  \mid \mid^2 \text{, } \forall {\bf K} \in \mathcal K, \label{eqn:grad_dom}
    \end{align}
    where $\mu>0$ is a constant, and $\bf K^* \in \mathcal K$ is the optimal policy parameter for a fixed $\lambda$, \ie, ${\bf K}^* = \arg \min_{\bf K \in \mathcal{K}} \mathcal{L}({\bf K}, \lambda)$.
    \label{lemma:grad_dom}
\end{lemma}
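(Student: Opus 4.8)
The plan is to split the gradient as $\nabla_K \mathcal{L}({\bf K},\lambda) = \nabla_K J({\bf K}) + \lambda\,\nabla_K J_c({\bf K})$ and reduce the claim to two manageable pieces. First I would invoke the gradient dominance of the pure LQR cost from \cite{fazelGlobalConvergencePolicy2018a,huTheoreticalFoundationPolicy2023}, namely that $J$ satisfies a Polyak--{\L}ojasiewicz inequality proved through the cost-difference (performance difference) identity and completing the square with respect to the natural-gradient quantity $E_{\bf K} = ({\bf R}+{\bf B}^T{\bf P}_K{\bf B}){\bf K} - {\bf B}^T{\bf P}_K{\bf A}$. The genuine difficulty is that $\mathcal{L}$ appends the non-quadratic term $\lambda J_c$, so its minimizer ${\bf K}^*$ differs from the LQR optimum and the associated advantage function is no longer quadratic in the action; a sum of two individually gradient-dominant functions with distinct minimizers need not be gradient dominant.

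The structural observation I would exploit is that $J_c$ collapses to a function of a single scalar. By the law of total probability and stationarity, the conditional probabilities in (\ref{eqn:JcK}) average to the marginal tail probability of $\mathbf{q}^T{\bf x}_{k+1}$, and since in steady state ${\bf x}_{k+1}\sim\mathcal{N}({\bf 0},\Sigma_K)$ this gives the exact identity
\[
J_c({\bf K}) = Q\!\left( \epsilon / \sqrt{{\bf q}^T\Sigma_K{\bf q}} \right),
\]
so $J_c$ depends on ${\bf K}$ only through the directional stationary variance $v({\bf K}) := {\bf q}^T\Sigma_K{\bf q} = \text{tr}({\bf q}{\bf q}^T\Sigma_K)$ (this is consistent with (\ref{eqn:Sigma_K}) via the conditional-variance decomposition). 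Because $v$ is itself an LQR-type cost with state weight ${\bf q}{\bf q}^T \succeq 0$ and zero control weight, it inherits the same coercivity, smoothness and gradient-domination machinery as $J$, and $Q(\epsilon/\sqrt{\cdot})$ is a smooth, strictly increasing, bounded reshaping of it; in particular $\nabla_K J_c = g'(v)\,\nabla_K v$ with $g'(v)>0$, so the stationary points of $J_c$ coincide with those of the variance cost $v$.

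With this reduction I would prove (\ref{eqn:grad_dom}) by a three-regime argument built on the earlier lemmas. Near the boundary $\partial\mathcal{K}$ the spectral radius tends to $1$ and $v\to\infty$, so $J_c$ stays bounded in $[0,1]$ while $g'(v)\sim v^{-3/2}$ forces $\nabla_K J_c$ to vanish; hence $\mathcal{L}$ and $\nabla_K\mathcal{L}$ are dominated by $J$ and inherit its global PL bound there. On the complementary bounded region I would use Lemma~\ref{lemma:coercive} and Lemma~\ref{lemma:Lsmooth} to work inside the compact sublevel set $\mathcal{K}_\zeta$, on which $\Sigma_K$, ${\bf P}_K$ and all derivatives of $Q(\epsilon/\sqrt{v})$ are uniformly bounded: near ${\bf K}^*$ a positive-definite Hessian yields a local quadratic-growth/PL bound, and on the compact annulus $\mathcal{K}_\zeta\setminus B_r({\bf K}^*)$ the continuous, strictly positive ratio $\|\nabla_K\mathcal{L}\|^2/(\mathcal{L}-\mathcal{L}^*)$ is bounded below. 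Taking $\mu$ as the reciprocal of the smallest of the three resulting lower bounds gives the inequality for all ${\bf K}\in\mathcal K$.

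I expect the main obstacle to be establishing that ${\bf K}^*$ is the \emph{unique} stationary point of $\mathcal{L}$ on $\mathcal{K}$ with a positive-definite Hessian. Stationarity requires $\nabla_K J({\bf K}) = -\lambda\,g'(v)\,\nabla_K v({\bf K})$, i.e. a balance between the LQR gradient and a positive multiple of the variance-cost gradient, and the non-convex, non-quadratic contribution of $J_c$ must be shown not to create spurious critical points in the interior of $\mathcal{K}_\zeta$. My plan to handle this is to leverage the monotone composition structure ($g'(v)>0$, so critical directions of the constraint term are those of the well-understood LQR-type cost $v$) together with the uniform curvature bounds on $\mathcal{K}_\zeta$, so that for sufficiently moderate $\lambda$ the Hessian of $\mathcal{L}$ remains positive definite and the gradient map remains injective; quantifying this and extracting the explicit constant $\mu$ is where the substantial new analysis beyond the risk-neutral LQR case is required.
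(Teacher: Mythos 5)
Your scalar reduction of the constraint is correct and is a genuinely nice observation: by the tower property and the Lyapunov identity (\ref{eqn:Sigma_K}), the stationary value of (\ref{eqn:JcK}) indeed collapses to $J_c({\bf K})=Q\bigl(\epsilon/\sqrt{{\bf q}^T\Sigma_K{\bf q}}\bigr)$, so $J_c$ depends on $\bf K$ only through the scalar variance $v({\bf K})$. The paper does not use this; it keeps the conditional form and differentiates it directly to obtain the explicit gradient (\ref{eqn:grad_JcK}). However, your overall strategy breaks down at precisely the point you flag and then defer. The middle regime of your three-regime argument (the compact annulus $\mathcal K_\zeta\setminus B_r({\bf K}^*)$) gives a positive lower bound on $\|\nabla_K\mathcal{L}\|^2/(\mathcal{L}-\mathcal{L}^*)$ only if $\nabla_K\mathcal{L}$ vanishes nowhere on that annulus, and your local regime needs $\nabla^2_K\mathcal{L}({\bf K}^*)\succ 0$; you establish neither. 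Your proposed remedy---curvature bounds valid ``for sufficiently moderate $\lambda$''---does not prove the lemma as stated, which asserts (\ref{eqn:grad_dom}) for \emph{every} fixed $\lambda>0$; moreover the downstream primal--dual scheme (Algorithm~\ref{algo:primal_dual}) and the experiments (where $\lambda$ ranges up to $100$) require exactly the large-$\lambda$ regime your argument excludes. Note also that a compactness/continuity argument cannot in principle close this hole: if $\mathcal{L}(\cdot,\lambda)$ had a spurious critical point ${\bf K}'$ with $\mathcal{L}({\bf K}',\lambda)>\mathcal{L}({\bf K}^*,\lambda)$, inequality (\ref{eqn:grad_dom}) would simply be false at ${\bf K}'$, so any valid proof must positively rule such points out rather than bound ratios on compact sets.

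The paper's route avoids the stationary-point question altogether, which is what makes it work for arbitrary $\lambda$. It first applies Lemma~C.6 of \cite{yangProvablyGlobalConvergence2019} together with $J(\bar{\bf K}^*)\le J({\bf K}^*)$ (where $\bar{\bf K}^*$ is the unconstrained LQR minimizer) to get $J({\bf K})-J({\bf K}^*)\le \mu_1\|\nabla_K J({\bf K})\|^2$; it then shows $\|\nabla_K\mathcal{L}({\bf K},\lambda)\|^2\ge\|\nabla_K J({\bf K})\|^2$ by expanding the square, discarding the $\lambda^2$ term, and arguing the cross term is nonnegative using the explicit formula (\ref{eqn:grad_JcK}) with $0\le e^{-a^2/2}\le 1$ and $\mathrm{E}[{\bf x}_k]={\bf 0}$; finally it absorbs the bounded residual $\lambda\bigl(J_c({\bf K})-J_c({\bf K}^*)\bigr)\le\lambda$ by enlarging $\mu$ (so $\mu$ depends on $\lambda$), using a lower bound on $\|\nabla_K\mathcal{L}\|^2$ away from the optimum. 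Two secondary soft spots in your sketch are worth recording as well: near $\partial\mathcal{K}$ the claim $v({\bf K})\to\infty$ is not automatic, since ${\bf q}^T\Sigma_K{\bf q}$ diverges only when $\bf q$ has a nonzero component along the nearly-unstable mode of ${\bf A}-{\bf B}{\bf K}$; and the assertion that $v$ ``inherits the same gradient-domination machinery as $J$'' is doubtful because its control weight is zero and its state weight ${\bf q}{\bf q}^T$ is rank one, so the $\sigma_{\min}({\bf R})>0$ and detectability hypotheses underlying the LQR PL inequality fail. Neither of these is fatal to a repaired argument, but the unresolved exclusion of spurious critical points for arbitrary $\lambda>0$ is: as written, your proposal reduces the lemma to an unproven claim rather than proving it.
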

\begin{proof}
    The proof of Lemma~\ref{lemma:grad_dom} is provided in Appendix~\ref{apdx:grad_dom}.
\end{proof}
Finally, we state the convergence rate result for the NPG algorithm as follows, which is a direct consequence of Lemma~\ref{lemma:Lsmooth} and Lemma~\ref{lemma:grad_dom}. The
convergence of the GNPG algorithm can be shown following similar steps and a detailed proof will be provided in our ongoing work. {\color{black}
\begin{lemma}
    (Convergence rate) For a given $\lambda > 0$, the NPG algorithm converge to a global optimal policy parameter ${\bf K}^*$ with a linear convergence rate, \ie,
    \begin{align}
        \mathcal{L}({\bf K}^{'} \lambda) - \mathcal{L}({\bf K}^*, \lambda) \le \beta (\mathcal{L}({\bf K}, \lambda) - \mathcal{L}({\bf K}^*, \lambda)) \label{eqn:conv_rate}
    \end{align}
    Here ${\bf K}^{'}$ is the NPG update from $\bf K$ in a single iteration, see (\ref{eqn:K_update}) \cite{yangProvablyGlobalConvergence2019}. The constant $0<\beta <1$ depends on the problem parameters and learning rate $\alpha > 0$. 
    \begin{equation}
        {\bf K}^{'} = {\bf K} - \alpha [{F}]^{-1} \nabla_K \mathcal{L}({\bf K}, \lambda) = {\bf K} - \alpha  \nabla_K \mathcal{L}({\bf K}, \lambda) \Sigma_K^{-1}. \label{eqn:K_update}
    \end{equation}
    Here ${F}$ is the Fisher information matrix. $[{F}]_{(i,j)(i^{'},j^{'})} = \text{E}[\nabla_{K_{ij}}\log(\pi_K({\bf u}|{\bf x}))\nabla_{K_{i^{'}j^{'}}}\log(\pi_K({\bf u}|{\bf x}))^T] $. 
    \label{lemma:conv_rate}
\end{lemma}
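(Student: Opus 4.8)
The plan is to combine the two structural facts already in hand---$L$-smoothness (Lemma~\ref{lemma:Lsmooth}) and gradient dominance (Lemma~\ref{lemma:grad_dom})---in the standard Polyak--\L{}ojasiewicz argument that converts a one-step sufficient-decrease inequality into a linear contraction. First I would invoke the descent inequality that $L$-smoothness provides along the segment joining $\mathbf{K}$ to its NPG update $\mathbf{K}'$,
\[
    \mathcal{L}(\mathbf{K}', \lambda) \le \mathcal{L}(\mathbf{K}, \lambda) + \langle \nabla_K \mathcal{L}(\mathbf{K}, \lambda), \mathbf{K}' - \mathbf{K} \rangle + \frac{L}{2}\| \mathbf{K}' - \mathbf{K} \|^2 ,
\]
and then substitute the natural-gradient step $\mathbf{K}' - \mathbf{K} = -\alpha \nabla_K \mathcal{L}(\mathbf{K}, \lambda)\Sigma_K^{-1}$ from (\ref{eqn:K_update}). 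This turns the linear term into $-\alpha\,\text{tr}\!\left(\nabla_K \mathcal{L}^T \nabla_K \mathcal{L}\,\Sigma_K^{-1}\right)$ and the quadratic term into $\tfrac{L\alpha^2}{2}\|\nabla_K \mathcal{L}\,\Sigma_K^{-1}\|^2$.

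The second step is to control the preconditioner $\Sigma_K^{-1}$ uniformly. From the Lyapunov equation (\ref{eqn:Sigma_K}) we have $\Sigma_K \succeq \Sigma_{\bar w} \succ 0$, and since $\mathcal{K}_\zeta$ is compact by Lemma~\ref{lemma:Lsmooth} and $\Sigma_K$ depends continuously on $\mathbf{K}$, the spectrum of $\Sigma_K$ lies in a fixed interval $[c_1, c_2]$ with $0 < c_1 \le c_2 < \infty$ for all $\mathbf{K} \in \mathcal{K}_\zeta$. I would use this to obtain the two-sided bounds $\text{tr}(\nabla_K \mathcal{L}^T \nabla_K \mathcal{L}\,\Sigma_K^{-1}) \ge c_2^{-1}\|\nabla_K \mathcal{L}\|^2$ and $\|\nabla_K \mathcal{L}\,\Sigma_K^{-1}\|^2 \le c_1^{-2}\|\nabla_K \mathcal{L}\|^2$, which reduce the descent inequality to
\[
    \mathcal{L}(\mathbf{K}', \lambda) - \mathcal{L}(\mathbf{K}, \lambda) \le -\Big(\tfrac{\alpha}{c_2} - \tfrac{L\alpha^2}{2c_1^2}\Big)\|\nabla_K \mathcal{L}(\mathbf{K}, \lambda)\|^2 .
\]
Choosing the learning rate $\alpha < 2c_1^2/(L c_2)$ makes the bracketed quantity a strictly positive constant $\kappa$.

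The third step is to bring in gradient dominance. Lemma~\ref{lemma:grad_dom} gives $\|\nabla_K \mathcal{L}(\mathbf{K}, \lambda)\|^2 \ge \mu^{-1}\big(\mathcal{L}(\mathbf{K}, \lambda) - \mathcal{L}(\mathbf{K}^*, \lambda)\big)$, so subtracting $\mathcal{L}(\mathbf{K}^*, \lambda)$ from both sides of the decrease inequality yields
\[
    \mathcal{L}(\mathbf{K}', \lambda) - \mathcal{L}(\mathbf{K}^*, \lambda) \le \Big(1 - \tfrac{\kappa}{\mu}\Big)\big(\mathcal{L}(\mathbf{K}, \lambda) - \mathcal{L}(\mathbf{K}^*, \lambda)\big),
\]
which is exactly (\ref{eqn:conv_rate}) with $\beta = 1 - \kappa/\mu$; taking $\alpha$ small enough additionally forces $\kappa/\mu < 1$, so $0 < \beta < 1$, and iterating the bound drives $\mathcal{L}(\mathbf{K}_i, \lambda) \to \mathcal{L}(\mathbf{K}^*, \lambda)$ geometrically, with global optimality inherited from the gradient dominance property.

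The hard part will not be the algebra but justifying that the update stays inside the compact sublevel set $\mathcal{K}_\zeta$, since that is precisely where $L$-smoothness and the spectral bounds $c_1, c_2$ are valid. I would argue this self-consistently: for $\alpha$ small the decrease inequality guarantees $\mathcal{L}(\mathbf{K}', \lambda) \le \zeta$, hence $\mathbf{K}' \in \mathcal{K}_\zeta$, while the straight-line path $\mathbf{K} + t(\mathbf{K}' - \mathbf{K})$, $t \in [0,1]$, can be kept in a slightly enlarged sublevel set on which $L$ is still uniform. The coercivity of Lemma~\ref{lemma:coercive} is the key enabling property here: it acts as a barrier preventing the iterate from reaching $\partial \mathcal{K}$, so closed-loop stability is preserved at every iteration and the single-step contraction can be chained into convergence of the entire sequence.
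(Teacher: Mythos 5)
Your proposal is correct and follows essentially the same route as the paper's own proof: the $L$-smoothness descent inequality along the NPG step (\ref{eqn:K_update}), a bound on the resulting trace terms in $\Sigma_K^{-1}$ (the paper invokes a matrix trace inequality together with the Lyapunov bound $\Sigma_K \succeq \Sigma_{\bar w}$ and a constant $C_\Sigma$, where you use uniform spectral bounds $[c_1,c_2]$ from compactness of $\mathcal{K}_\zeta$ --- an equivalent device), and then gradient dominance (Lemma~\ref{lemma:grad_dom}) to obtain the contraction factor $\beta = 1-\kappa/\mu$ with $0<\beta<1$ for sufficiently small $\alpha$. Your explicit self-consistency argument that the iterates remain in the sublevel set $\mathcal{K}_\zeta$, which the paper leaves implicit via the coercivity remark following Lemma~\ref{lemma:coercive}, is a minor tightening rather than a departure.
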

\begin{proof}
    The proof of Lemma~\ref{lemma:conv_rate} is provided in Appendix~\ref{apdx:convergence_rate}.
\end{proof}
\begin{remark}
    Note that the variant of NPG algorithm  as given in Algorithm~\ref{algo:NPG} is $\alpha = \sqrt \frac{\alpha_a}{{\nabla_K \mathcal{L}({\bf K}, \lambda)}^T{[F]}^{-1} {\nabla_K \mathcal{L}({\bf K}, \lambda)}}$, where $\alpha_a > 0$ is a user selected parameter \cite{rajeswaranGeneralizationSimplicityContinuous2017}. The minor difference in the stepsize expression from Algorithm~\ref{algo:NPG} is due to the fact that the unknown parameter $\theta$ in Algorithm~\ref{algo:NPG} is a vector, but in the proof, $\bf K$ is a matrix.   
\end{remark}}
\begin{remark}
    The convergence rate lemma (\ref{eqn:conv_rate}) means if we start from an arbitrary stabilizing controller ${\bf K}_0 \in \mathcal{K}$, and update the policy parameter ${\bf K}_i$ using the NPG algorithm, then ${\bf K}_i \rightarrow {\bf K}^*$ as $i \rightarrow \infty$ at an exponential rate.
\end{remark}

In summary, so far, we have proved that the NPG algorithm converges linearly to a global optimal policy parameter $\bf K^*$ that minimizes $\mathcal{L}(\bf K, \lambda)$ for a given $\lambda>0$ when the relevant quantities in the algorithms are computed assuming true model knowledge. {\color{black}
\subsection{Finding the optimal value of the Lagrange multiplier $\lambda$}
We follow a primal-dual approach to find an optimal value of the Lagrange multiplier $\lambda$, see Algorithm~\ref{algo:primal_dual}. The dual problem is defined as follows \vspace{-3.0mm}
\begin{align}
    \max_{\lambda \ge 0} D(\lambda) =  \max_{\lambda \ge 0} \min_{\bf K \in \mathcal K} \mathcal{L}(\bf K, \lambda) \label{eqn:dual_prob}.
\end{align}
\begin{algorithm}[h!]
    \caption{Primal-Dual Algorithm}
    \label{algo:primal_dual}
    \begin{algorithmic}[1]
        \State Initialize $\lambda_0$ and $\alpha_{\lambda,0}$.
        \For{$i=0,1,2,\ldots$}
            \State Solve the primal problem $\bf K_{i} = \operatorname*{arg\,min}_{\bf K \in \mathcal K} \mathcal{L}(\bf K, \lambda)$ using NPG or GNPG, see Algorithm~\ref{algo:NPG}.
            \State Evaluate $\nabla_{\lambda} \mathcal{L}(\bf K_i, \lambda) = J_c({\bf K}_i) - \delta$.
            \State Update $\lambda_{i+1} = \max\left(0, \lambda_i + \alpha_{\lambda,i} \nabla_{\lambda} \mathcal{L}(\bf K_i, \lambda) \right)$. 
        \EndFor
    \end{algorithmic}
\end{algorithm} 

In Algorithm~\ref{algo:primal_dual}, $\alpha_{\lambda,i} >0$, $\alpha_{\lambda,i} = \mathcal{O}(i^{-1/2})$, is the learning rate for the Lagrange multiplier $\lambda$.} To prove that the pair ($\bf K^*, \lambda^*$) is also the optimal solution to the primal constrained problem (\ref{eqn:opt_prob}), we need Assumption~\ref{assump:slater} and Lemma~\ref{lemma:duality}.

\begin{assumption}
    (Slater's condition) There exists a $\bf \bar K \in \mathcal K$ such that $J_c(\bf \bar K) < \delta$.
    \label{assump:slater}
\end{assumption}
\begin{lemma}
    (Strong duality) Under
    Assumption~\ref{assump:slater}, the optimal value of the primal problem (\ref{eqn:opt_prob}) is equal to the optimal value of the dual problem (\ref{eqn:dual_prob}), \ie, $J^* = D^*$.
    \label{lemma:duality}
\end{lemma}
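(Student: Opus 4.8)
The plan is to prove zero duality gap by combining a hidden-convexity reformulation of (\ref{eqn:opt_prob}) with the gradient-dominance property already established in Lemma~\ref{lemma:grad_dom}. I would begin by putting the risk functional into closed form. Since $\{\mathbf{x}_k\}$ is ergodic with stationary law $\mathcal{N}(\mathbf{0},\Sigma_K)$ and $\Sigma_{\bar w}$ does not depend on $\mathbf{K}$, the unconditional law of $\mathbf{q}^{T}\mathbf{x}_{k+1}$ is $\mathcal{N}(0,\mathbf{q}^{T}\Sigma_K\mathbf{q})$, so taking the outer expectation in (\ref{eqn:JcK}) collapses the conditional tail probability to $J_c(\mathbf{K}) = Q\!\left(\epsilon/\sqrt{\mathbf{q}^{T}\Sigma_K\mathbf{q}}\right)$. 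Because $Q(\cdot)$ is continuous and strictly decreasing and $\epsilon>0$, for $0<\delta<1/2$ the constraint $J_c(\mathbf{K})\le\delta$ is equivalent to the single linear-in-$\Sigma_K$ inequality $g(\mathbf{K}):=\mathbf{q}^{T}\Sigma_K\mathbf{q}-c_\delta\le 0$ with $c_\delta:=\epsilon^{2}/\big(Q^{-1}(\delta)\big)^{2}$; note $\mathbf{q}^{T}\Sigma_K\mathbf{q}\ge\mathbf{q}^{T}\Sigma_{\bar w}\mathbf{q}>0$, so all quantities are well defined on $\mathcal{K}$.

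Next I would exhibit the hidden convexity. Parametrizing by the stationary joint covariance $\Phi=\mathrm{E}[\zeta_k\zeta_k^{T}]$ of $\zeta_k=[\mathbf{x}_k^{T}\ \mathbf{u}_k^{T}]^{T}$, the objective $J=\mathrm{tr}(\mathbf{Q}\Sigma_{xx})+\mathrm{tr}(\mathbf{R}\Sigma_{uu})$ and the risk constraint $\mathbf{q}^{T}\Sigma_{xx}\mathbf{q}\le c_\delta$ are both linear in $\Phi$, the stationarity relation $\Sigma_{xx}=[\mathbf{A}\ \mathbf{B}]\Phi[\mathbf{A}\ \mathbf{B}]^{T}+\Sigma_w$ is affine in $\Phi$, and the fixed-exploration condition $\Sigma_{uu}-\Sigma_{ux}\Sigma_{xx}^{-1}\Sigma_{xu}=\Sigma_\sigma$ relaxes (via a Schur complement) to an LMI, giving a convex program whose optimum is attained where the LMI is tight because $J$ is increasing in $\Sigma_{uu}$. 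On this tight set $\Phi$ is in bijection with $\mathcal{K}$ (stability of $\mathbf{A}-\mathbf{B}\mathbf{K}$ follows from $\Sigma_{xx}\succ0$ solving a Lyapunov equation with $\Sigma_{\bar w}\succ0$), so the convex program and (\ref{eqn:opt_prob}) share the optimal value $J^{*}$ and minimizer $\mathbf{K}^{*}$. Invoking Slater's condition (Assumption~\ref{assump:slater}), which supplies a strictly feasible point of the convex program, standard convex Lagrangian duality yields a multiplier $\tilde\lambda^{*}\ge0$ with no gap and the stationarity $\nabla J(\mathbf{K}^{*})+\tilde\lambda^{*}\nabla g(\mathbf{K}^{*})=0$.

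I would then transfer this certificate back to the original (nonconvex) Lagrangian $\mathcal{L}(\mathbf{K},\lambda)=J(\mathbf{K})+\lambda(J_c(\mathbf{K})-\delta)$. By the chain rule and the closed form of $J_c$, $\nabla J_c(\mathbf{K}^{*})=\beta\,\nabla g(\mathbf{K}^{*})$ with $\beta:=\tfrac{d}{dv}Q(\epsilon/\sqrt{v})\big|_{v=\mathbf{q}^{T}\Sigma_{K^{*}}\mathbf{q}}>0$, so setting $\lambda^{*}:=\tilde\lambda^{*}/\beta\ge0$ gives $\nabla_K\mathcal{L}(\mathbf{K}^{*},\lambda^{*})=\nabla J(\mathbf{K}^{*})+\tilde\lambda^{*}\nabla g(\mathbf{K}^{*})=0$. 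Here the gradient-dominance property of Lemma~\ref{lemma:grad_dom} is decisive: for the fixed $\lambda^{*}>0$ it forces every stationary point of $\mathcal{L}(\cdot,\lambda^{*})$ to be the global minimizer, hence $\min_{\mathbf{K}\in\mathcal{K}}\mathcal{L}(\mathbf{K},\lambda^{*})=\mathcal{L}(\mathbf{K}^{*},\lambda^{*})$. Complementary slackness ($\tilde\lambda^{*}g(\mathbf{K}^{*})=0$, with $J_c(\mathbf{K}^{*})=\delta$ when active and $\lambda^{*}=0$ otherwise) kills the penalty term, so $D(\lambda^{*})=\mathcal{L}(\mathbf{K}^{*},\lambda^{*})=J(\mathbf{K}^{*})=J^{*}$. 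Since weak duality gives $D^{*}\ge D(\lambda^{*})=J^{*}\ge D^{*}$, I conclude $J^{*}=D^{*}$.

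The main obstacle I anticipate is the rigorous justification of the hidden-convexity step rather than any single estimate: verifying that $\mathbf{K}\mapsto\Phi$ is a genuine bijection onto the relevant convex set, that relaxing the exploration equality to an LMI leaves $J^{*}$ unchanged (tightness at the optimum), and that stationarity in the lifted variables transfers to stationarity in $\mathbf{K}$ at an interior point (which coercivity, Lemma~\ref{lemma:coercive}, guarantees). One must also handle the regime $\delta\ge1/2$ and the inactive-constraint case separately, and confirm $\beta>0$ so that the conversion $\lambda^{*}=\tilde\lambda^{*}/\beta$ is valid; the gradient-dominance upgrade in the third step is precisely what lets the argument avoid assuming convexity of $\mathcal{L}(\cdot,\lambda)$ in $\mathbf{K}$.
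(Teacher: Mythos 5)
Your proposal is correct in substance but follows a genuinely different route from the paper. The paper's proof (Appendix~D) imports the primal--dual argument of Theorem~2 in \cite{zhaoGlobalConvergencePolicy2023a}: it first shows $\lambda^* \triangleq \inf\{\lambda \ge 0 : J_c(\mathbf{K}^*(\lambda)) \le \delta\} < \infty$ by contradiction from Assumption~\ref{assump:slater}, and then argues that $\lambda \mapsto \mathbf{K}^*(\lambda)$ and $\lambda \mapsto J_c(\mathbf{K}^*(\lambda))$ are continuous (via continuity of the PG iterates in $\lambda$ together with Lemma~\ref{lemma:grad_dom}), closing the gap by complementary slackness at $\lambda^*$; the paper stresses that this route uses no closed form of the constraint, so it survives, e.g., the Gaussian-mixture extension mentioned as ongoing work. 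You instead exploit the Gaussian specialization: by the tower property and stationarity, (\ref{eqn:JcK}) collapses to $J_c(\mathbf{K}) = Q\bigl(\epsilon/\sqrt{\mathbf{q}^T\Sigma_K\mathbf{q}}\bigr)$ --- correct, and notably this is exactly the reformulation underlying the paper's own model-based CLQR baseline, Algorithm~\ref{algo:SDP}, whose constraint $q^T\mathbf{X}q \le \alpha\epsilon^2$ with $\alpha = (nrm^{-1}(1-\delta))^{-2}$ is your $c_\delta$ --- then you lift to a covariance SDP, extract a KKT multiplier $\tilde\lambda^*$ by convex duality, rescale by $\beta > 0$, and use gradient dominance to promote the stationary point $\nabla_K\mathcal{L}(\mathbf{K}^*,\lambda^*) = 0$ to a global minimizer of $\mathcal{L}(\cdot,\lambda^*)$, so $D(\lambda^*) = J^*$ and weak duality finishes. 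Each approach buys something: yours is constructive (it exhibits $\lambda^*$ explicitly) and sidesteps the paper's continuity step, which as written is delicate --- a pointwise limit of $\lambda$-continuous iterates $\mathbf{K}_i(\lambda)$ need not be continuous without uniform convergence --- while the paper's argument is model-agnostic in the constraint and hence more general. The details you flag as obstacles are all fixable by standard arguments: exactness of the LMI relaxation does not actually need a bijection, since from any feasible $\Phi$ one recovers $\mathbf{K} = -\Sigma_{ux}\Sigma_{xx}^{-1}$, and the stationarity equality plus the Schur LMI yield $\Sigma_{xx} \succeq (\mathbf{A}-\mathbf{B}\mathbf{K})\Sigma_{xx}(\mathbf{A}-\mathbf{B}\mathbf{K})^T + \Sigma_{\bar w}$ with $\Sigma_{\bar w} \succ 0$, giving stability, $\Sigma_K \preceq \Sigma_{xx}$ by Lyapunov comparison, and hence a cost and risk no worse than the relaxed values; Slater for the lifted program follows by slightly inflating $\Sigma_{uu}$ starting from $\Phi(\bar{\mathbf{K}})$; interiority of $\mathbf{K}^*$ is indeed supplied by Lemma~\ref{lemma:coercive}; and the regime $\delta \ge 1/2$ is trivial since then $J_c(\mathbf{K}) < 1/2 \le \delta$ for every $\mathbf{K} \in \mathcal{K}$, so $\lambda^* = 0$ works (using the risk-neutral gradient dominance, i.e., the $\mu_1$ bound in the paper's Appendix~B, since Lemma~\ref{lemma:grad_dom} is stated for $\lambda > 0$).
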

\begin{proof}
    The proof of Lemma~\ref{lemma:duality} is provided in Appendix~\ref{apdx:duality}.
\end{proof} {\color{black}
\begin{remark} (\textbf{Convergence of Algorithm~\ref{algo:primal_dual}}) Based on Lemma~\ref{lemma:coercive} and Lemma~\ref{lemma:grad_dom}, we can say that the controller ${\bf K}_i$ from Algorithm~\ref{algo:primal_dual} will always produce a stabilizing controller. This implies that both $\nabla_{\lambda} \mathcal{L}(\bf K_i, \lambda)$ and $\lambda_i$ will be bounded by some positive constants. Furthermore, according to Theorem 4 in \cite{zhaoGlobalConvergencePolicy2023}, we can conclude that Algorithm~\ref{algo:primal_dual} will converge to an optimal policy at a sublinear rate, given that the step size $\alpha_{\lambda,i} = \mathcal{O}(i^{-1/2})$.
\end{remark}}
\begin{remark} {\color{black}
    Note that in this paper, we only studied the convergence properties of the NPG and GNPG algorithms under the known model scenario, which does not require any critic network. In other words, the value function is computed using the true model knowledge. On the other hand, the proof for the unknown model case is challenging since we approximate the value function and estimate the advantage and the required gradients from the input-output data.} For the model-free scenario, in general, it is first proved that the value function approximator will converge and provide a close approximation of the true value of the state after a sufficiently large number of iterations, say $i \ge I_T$ \cite{yangProvablyGlobalConvergence2019}. Then it can be shown that $|\mathcal{L}({\bf K}_{i+1},\lambda) - \mathcal{L}({\tilde {\bf K}}_{i+1},\lambda)|$ is small for $i \ge I_T$, where $\mathcal{L}({\bf K}_{i+1},\lambda)$ and $\mathcal{L}({\tilde {\bf K}}_{i+1},\lambda)$ are the Lagrangian functions evaluated using the true and approximated value functions, respectively. Combining this with the convergence rate Lemma~\ref{lemma:conv_rate}, we can show that the NPG and GNPG algorithms converge to a global optimal policy parameter $\bf K^*$ with a linear convergence rate for the unknown model case. However, detailed proofs for the unknown model case are part of our ongoing work. 
\end{remark}

\section{Numerical Results} \label{sec:results}
In this section, we compare the performance of the NPG, GNPG, and DDPG algorithms with the risk-neutral LQR, chance constrained LQR (CLQR) and the scenario-based MPC by numerical simulations. As our case study, we have considered an unmanned aerial vehicle (UAV) model \cite{zhaoGlobalConvergencePolicy2023a}, a fourth-order LTI system. The UAV model parameters and the parameter values used for the simulation study are provided in Appendix~\ref{apdx:params}. All three PG-based algorithms use the same network structures for actor and critic networks. The actor is just a linear function of the state, and the critic is a fully connected neural network with two hidden layers of size (10,50). We have used the $\tanh$ activation function for the hidden nodes, and the outer layer has no activation function. 

The chance constrained LQR and scenario-based MPC algorithm used for the comparison are given in Algorithm ~\ref{algo:SDP} \cite{schildbach2015linear} and Algorithm~\ref{algo:MPC} \cite{schildbachScenarioApproachStochastic2014}, respectively.
\begin{algorithm}[h!]
    \caption{Chance constrained LQR (CLQR)}
    \label{algo:SDP}
    \begin{algorithmic}
        \State Apply SDP to solve the following optimization problem and obtain the optimal controller as ${\bf K_{sdp}} ={\bf Y} {\bf X}^{-1}$.
        \begin{equation*}
            \begin{aligned}
                &\min_{{\bf{X}}, {\bf Y}, {\bf P}} \quad \text{Tr}({\bf QX}) + \text{Tr}({\bf P})\\
                &\text{s.t.} \quad \begin{bmatrix}
                    {\bf P} & ({\bf R}^{1/2} {\bf Y}) \\
                    ({\bf R}^{1/2} {\bf Y})^T & {\bf X}
                \end{bmatrix} \succeq 0,\\
                &\quad \begin{bmatrix}
                    {\bf X} - {\bf W} & {\bf A}{\bf X} + {\bf B}{\bf Y} \\
                    ({\bf A}{\bf X} + {\bf B}{\bf Y})^T & {\bf X}
                \end{bmatrix} \succeq 0,\\
                &\quad q^T{\bf X} q \leq \alpha \epsilon^2, \text{ where } \alpha = (nrm^{-1}(1-\delta))^{-2}.
            \end{aligned}
        \end{equation*}
        \State Here $nrm(\cdot)$ is the cumulative normal distribution function. ${\bf{X}} \in {\rm I\!R}^{n\times n} >0$, ${\bf{P}} \in {\rm I\!R}^{p\times p} \geq 0$, and ${\bf{Y}} \in {\rm I\!R}^{p\times n}$.
    \end{algorithmic}
\end{algorithm}



\begin{algorithm}[h!]
    \caption{Scenario-based chance constraint MPC}
    \label{algo:MPC}
    \begin{algorithmic}
        \State Perform the following steps at each time step $t$:
    
        \State Measure the current state ${\bf x}_t$.
    
        \State Generate $S$ noise samples ${\bf w}_t^{(1)}, \cdots, {\bf w}_t^{(S)} \sim f_w({\bf w})$.
    
        \State Solve the following optimization problem:
        \begin{equation*}
            \begin{aligned}
                &\min_{{\bf u}_{1|t},\cdots,{\bf u}_{T|t}} \sum_{s=1}^S \sum_{i=1}^T \left(f\left( {\bf x}_{i|t}^{(s)},{\bf u}_{i|t}\right) + \lambda \mathds{1}_{\left\{f_c\left({\bf x}_{i+1|t}^{(s)}\right) \ge \epsilon\right\}}   \right)\\
                &\text{s.t. (\ref{eqn:state_eqn}) is satisfied. }
            \end{aligned}
        \end{equation*}
    
        \State Apply the first control input ${\bf u}_{1|t}$ to the system. 
        \State [$f(\cdot|\cdot)$ is given in (\ref{eqn:fk}). ${\bf x}_{i|t}$ and ${\bf u}_{i|t}$ denote predictions and plans of the state and input variables made at time $t$, for $i$ steps into the future.]
    \end{algorithmic}
\end{algorithm}
First, we compare the training performance of the PG-based algorithms by plotting average return $\mathcal{R}$ with respect to the training timestep data from a trial run in Fig.~\ref{fig:return}. The average return is evaluated from the separate test data using the trained models after each $5\times 10^5$ timestep training data sample. When starting from the same stable controller, we observe that NPG and GNPG algorithms maintain closed-loop stability while training. On the other hand, DDPG does not necessarily maintain closed-loop stability while training under similar conditions. However, in some instances DDPG is seen to reach the optimal policy faster. 
\begin{figure}[h!]
    \centering
    \includegraphics[width=65mm]{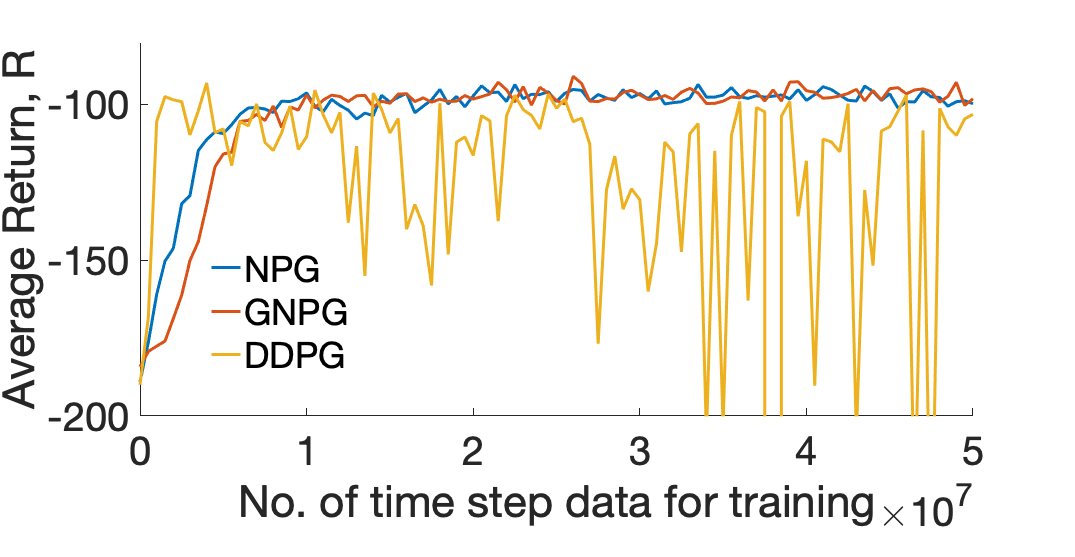}
    \caption{Average return $\mathcal{R}$. $\lambda = 100$, $J_c = 8.7\%$ (NPG).}
    \label{fig:return}
\end{figure}
Figure~\ref{fig:prob_and_cost} compares the constraint violation probability $J_c$ and the control cost $J$ for different $\lambda$ values for all five algorithms. All the PG-based algorithms are trained over $5\times 10^7$ timestep samples. Because of their stochastic nature, we have taken the average of the actor-network parameter of the best ten training iterations to generate the test results in Fig.~\ref{fig:prob_and_cost}.
\begin{figure}[h!]
    \centering
    \includegraphics[width=75mm]{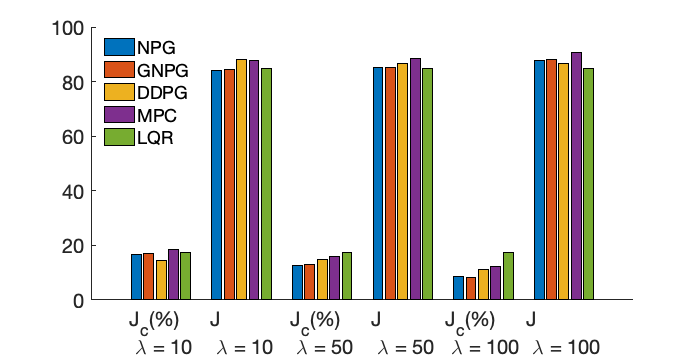}
    \caption{Constraint violation probability $J_c$ and control cost $J$ for different $\lambda$ values.}
    \label{fig:prob_and_cost}
\end{figure}


As expected, the constraint violation percentage gets reduced for the PG-based methods at the expense of a small increase in the quadratic cost when compared with the standard LQR. Moreover, the performance of the MPC method has a  similar trend to that of PG-based algorithms. However, it is crucial to note that MPC's performance is heavily dependent on the chosen parameters \( S = 20 \) and \( T = 5 \) in Algorithm 3. While increasing these parameters can enhance MPC's performance, it comes with the trade-off of increased computational complexity, which is of the order of $ST^2$ \cite{skaf2009nonlinear}. In addition, MPC necessitates solving an optimization problem at every time step, in contrast to the PG-based methods, which only require evaluating the feed-forward actor-network. This distinction renders PG-based methods significantly less computationally complex than MPC. It is also important to acknowledge that MPC is a model-based method, which further differentiates it from the PG-based techniques. We also observed that the results from the DDPG algorithm  do not show a clear trend as the other algorithms with the increase in $\lambda$. Because of this reason, we have excluded DDPG from the comparative plot in Fig.~\ref{fig:prob_vs_cost}, where the control cost $J$ for different $\lambda$ is plotted with respect to the constraint violation probability $J_c$. For CLQR, the $J_c$ values from NPG is used as thresholds, \ie, $\delta = J_c/100$. We observe that NPG and GNPG algorithms perform almost similarly, outperforming the MPC for the given \( S = 20 \) and \( T = 5 \). Additionally, the proposed PG-based methods performed very similar to the CLQR method. Note that, CLQR is a model-based approach.

\begin{figure}[h!]
    \centering
    \includegraphics[width=45mm]{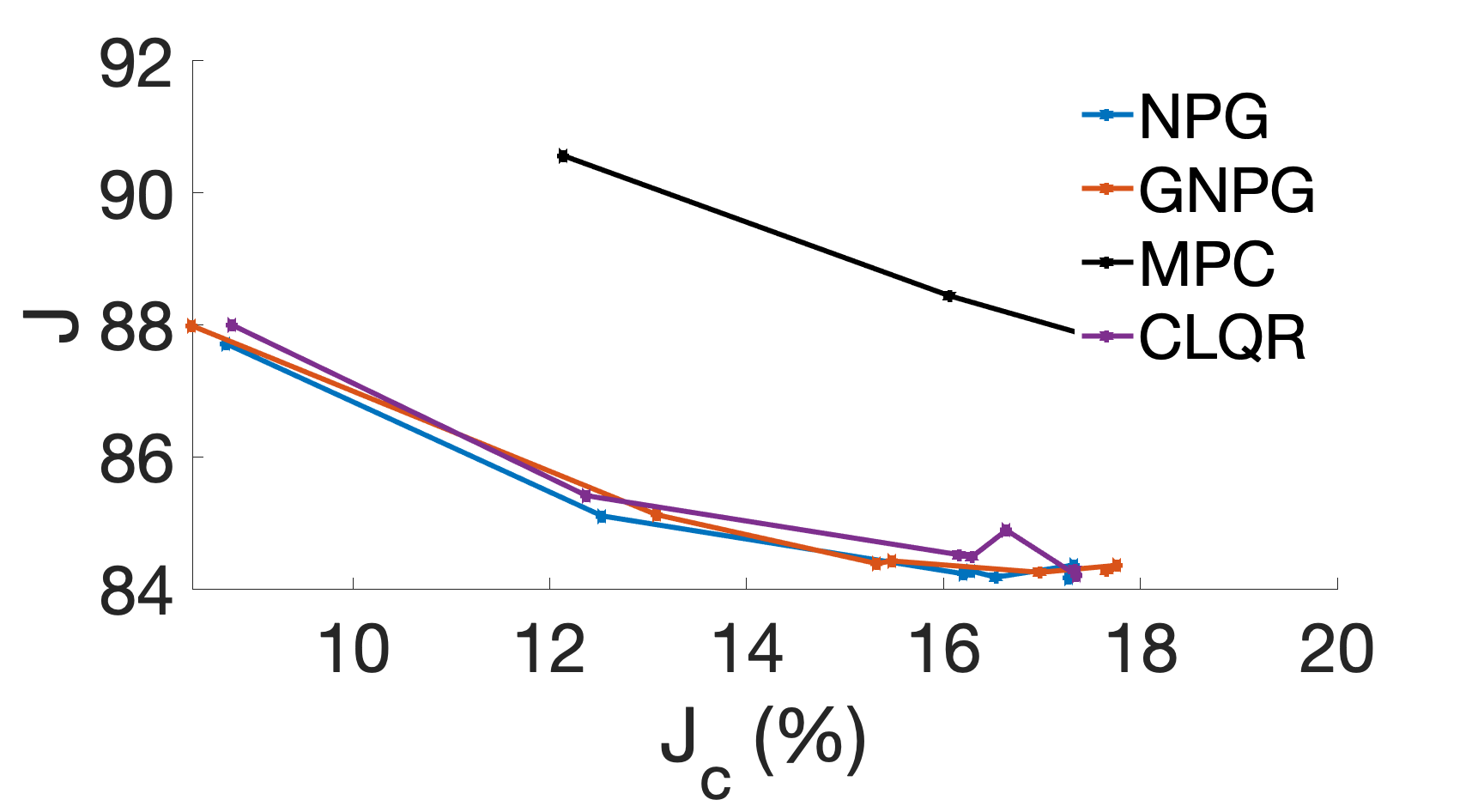}
    \caption{Control cost $J$ vs. Constraint violation probability $J_c$. $\lambda = [1,5,10,15,20,50,100]$.}
    \label{fig:prob_vs_cost}
\end{figure}

Figure~\ref{fig:policy_grad} and Fig.~\ref{fig:critic_loss} compare the norm of the policy gradient, \ie, $||\hat G||$, and critic loss with respect to the number of training iterations for the NPG and GNPG algorithms. The plot shows the mean and 95\% confidence interval of the quantity. We observe that NPG has a marginally better convergence rate compared to GNPG.
\begin{figure}[h!]
    \centering
    \includegraphics[width=50mm]{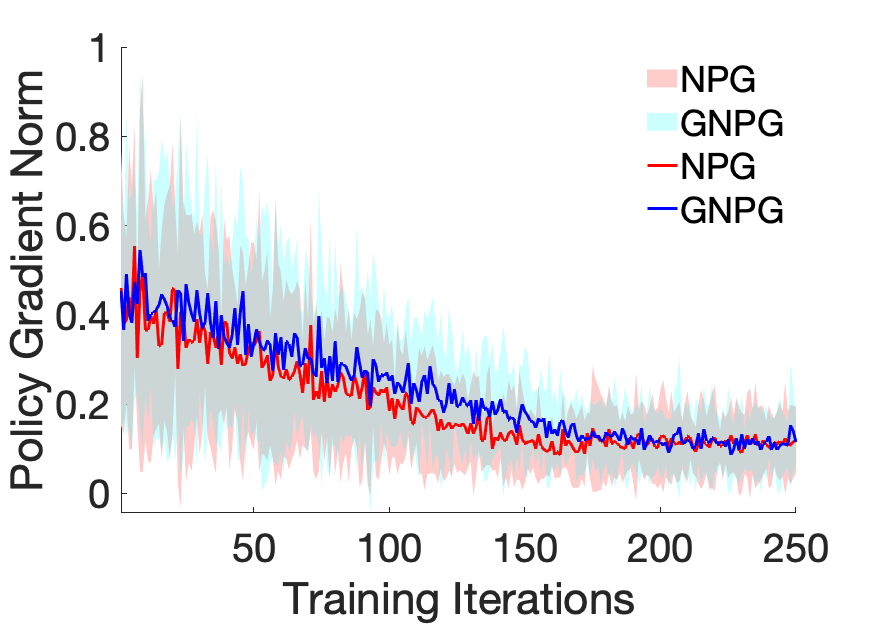}
    \caption{Norm of the policy gradient. $\lambda = 10$, $J_c = 16.5\%$ (NPG).}
    \label{fig:policy_grad}
\end{figure}
\begin{figure}[h!]
    \centering
    \includegraphics[width=50mm]{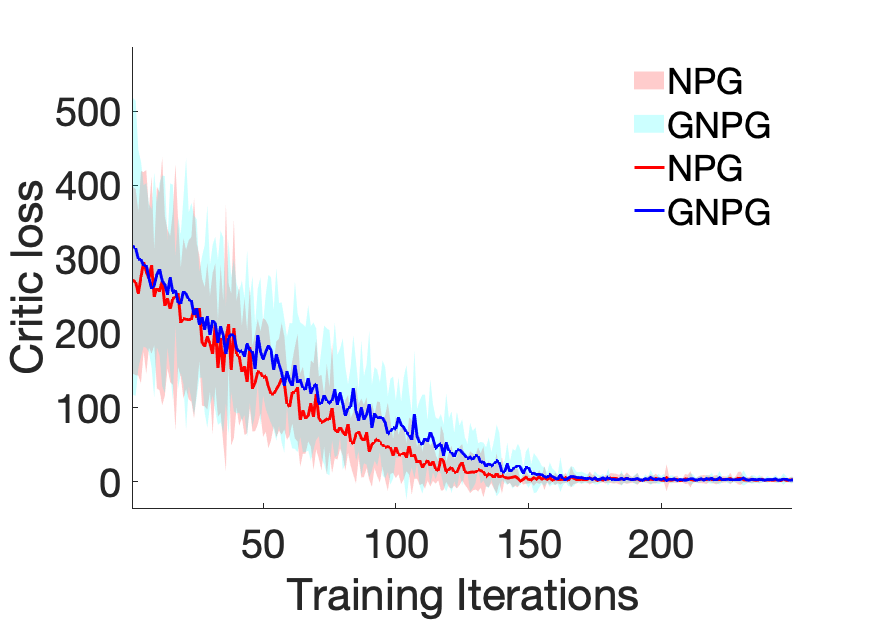}
    \caption{Critic loss. $\lambda = 10$, $J_c = 16.5\%$ (NPG).}
    \label{fig:critic_loss}
\end{figure}
{\color{black}
In Fig.~\ref{fig:sdp_vs_pd}, we compared the primal-dual algorithm, Algorithm~\ref{algo:primal_dual}, for NPG-based AC with the CLQR, Algorithm~\ref{algo:SDP}, for the same threshold value $\delta$, and observed slightly lower control cost $J$ for the CLQR.} 
\begin{figure}[h!]
    \centering
    \includegraphics[width=50mm]{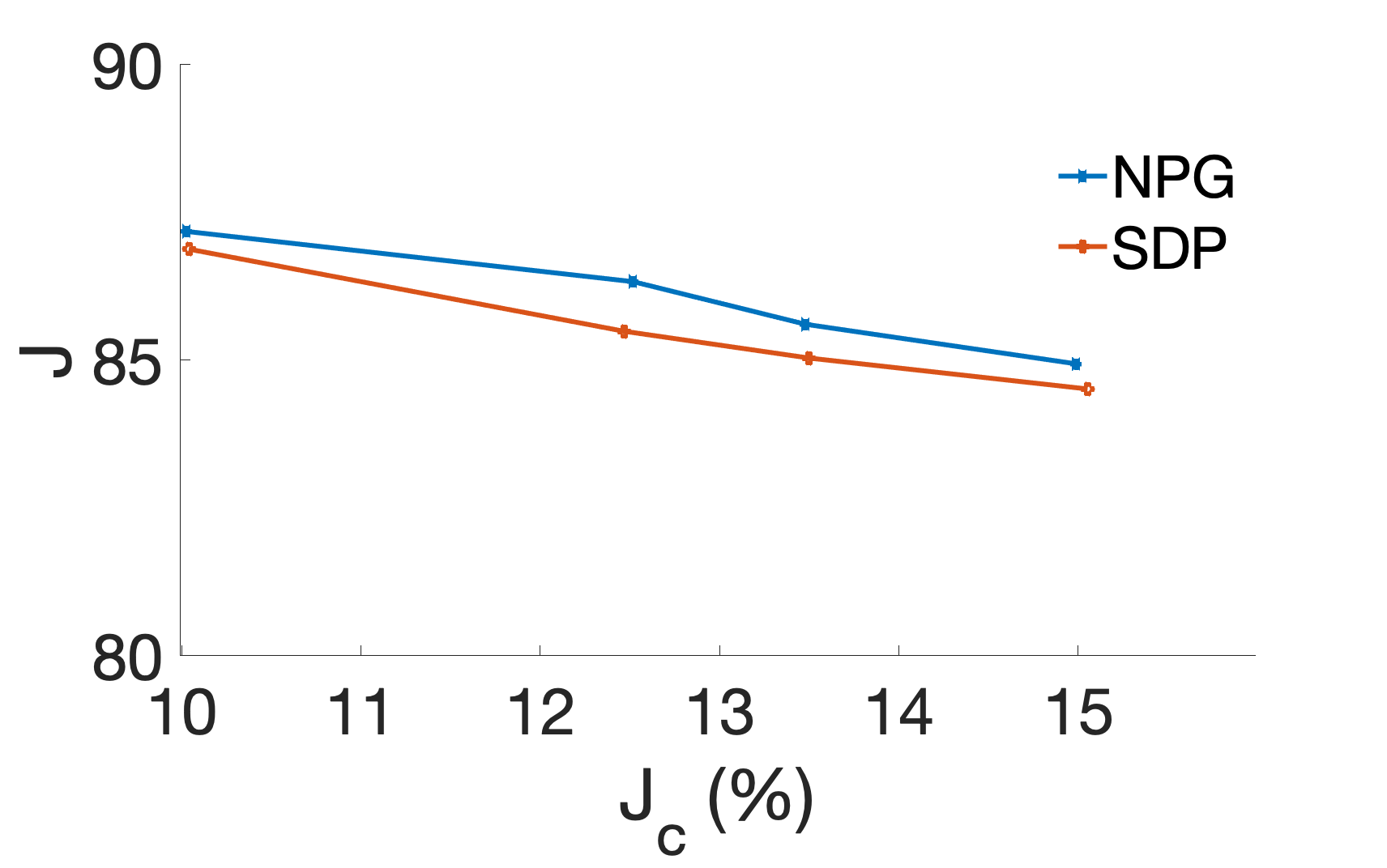}
    \caption{Control cost $J$ vs. Constraint violation probability $J_c$. $\delta = [15.0,13.5, 12.5, 10.0]$.}
    \label{fig:sdp_vs_pd}
\end{figure}



\section{Conclusion} \label{sec:conclusion}
We have considered  PG-based AC algorithms for a probabilistic risk- or chance-constrained LQR problem under the unknown model scenario. The numerical simulations show that the NPG and GNPG-based AC methods exhibit good convergence properties and maintain closed-loop stability while training. On the other hand, DDPG has a larger variance, and the closed-loop system may not remain stable during training.  Finally, we observe that all the PG-based algorithms perform similarly to the chance-contrained LQR and scenario-based MPC technique, which are model-based approaches. Additionally, MPC has certain computational disadvantages compared to the PG-based model-free approaches. Furthermore, we have proved the analytical convergence properties of the NPG and GNPG algorithms under the known model scenario. The proof of convergence for the unknown model scenario is part of our ongoing work.

\appendices

\section{Parameters}
\label{apdx:params} \vspace{-5mm}
\begin{align*}
	&{\bf A} =\begin{bmatrix}
		1 & 0.5 & 0 & 0  \\
		0 & 1 & 0& 0 \\
		0 & 0 & 1 & 0.5 \\
		0 &0 & 0 & 1
	\end{bmatrix}, {\bf B} =\begin{bmatrix}
	0.125 & 0  \\
	0.5 & 0 \\
	0 & 0.125 \\
	0 & 0.5
\end{bmatrix},  \\
	&{\bf W} = diag\left(1, 0.1, 2, 0.2 \right), {\bf U} = {\bf I} ,  \epsilon = 5, \Sigma_w = diag(80,0.01) \\
    &{\bf q} = \left[1, 0.1, 2, 0.2 \right]^T, \alpha_c = 0.005, \alpha_a = 0.005, \alpha_d = 0.001, \\
    & \Sigma_{D,0} = 5{\bf I}, \Sigma_{D,F} = 0.01{\bf I}, \Sigma_u = {\bf I}. 
\end{align*}

\section{Proof of Lemma \ref{lemma:grad_dom}} \label{apdx:grad_dom}
From Lemma~C.6 in \cite{yangProvablyGlobalConvergence2019}, we can write
\begin{align}
    J({\bf K}) - J({{\bf \bar K}}^*) \le \frac{||\Sigma_{\bar K^*}||}{\sigma_{\min}(\bf R)}\text{tr}({\bf E}_K^T{\bf E}_K), \label{eqn:diff_JK}
\end{align}
where ${\bf \bar K}^*$ is the optimal policy parameter that miminizes only the cost function $J(\bf K)$ and 
\begin{align}
    {\bf E}_K &= ({\bf R} + {\bf B}^T{\bf P}_{ K}{\bf B}){\bf K} - {\bf B}^T{\bf P}_{ K}{\bf A}. \label{eqn:E_K}
\end{align}
${\bf K^*}$ is the optimal policy parameter that minimizes the Lagrangian function $\mathcal{L}(\bf K, \lambda)$ for a given $\lambda > 0$. From (\ref{eqn:diff_JK}), we can write
\begin{align}
    &J({\bf K}) - J({{\bf K}}^*) \le J({\bf K}) - J({{\bf \bar K}}^*) \le \frac{||\Sigma_{\bar K^*}||}{\sigma_{\min}(\bf R)}\text{tr}({\bf E}_K^T{\bf E}_K). \nonumber \\
    &\le \frac{||\Sigma_{\bar K^*}||}{4\sigma_{\min}(\bf R)}\text{tr}(4{\bf \Sigma_K}^{-1}{\bf \Sigma_K}{\bf E}_K^T{\bf E}_K{\bf \Sigma_K}^{-1}{\bf \Sigma_K}). \nonumber \\
    &\le \frac{n||\Sigma_{\bar K^*}||}{4\sigma^2_{\min}(\Sigma_K) \sigma_{\min}(\bf R)}\mid \mid \nabla_K {J}({\bf K})  \mid \mid^2 \le \mu_1 \mid \mid \nabla_K {J}({\bf K})  \mid \mid^2,  \nonumber \\
    & \text{where } \mu_1 = \frac{n||\Sigma_{\bar K^*}||}{4\sigma^2_{\min}(\Sigma_K) \sigma_{\min}(\bf R)}. \label{eqn:diff_JK2}
\end{align}
In (\ref{eqn:diff_JK2}), we have used the following results from \cite{yangProvablyGlobalConvergence2019},
\begin{align}
    \nabla_K {J}({\bf K}) = 2 {\bf E}_K{\Sigma_K}. \label{eqn:grad_JK} 
\end{align}
Taking derivative of (\ref{eqn:JcK}) with respect to $\bf K$ we can write,
\begin{equation}
    \nabla_K {J_c}({\bf K}) = -\text{E}\left[\exp\left(-a({\bf x}_k, {\bf K})^2 /2\right) \frac{{\bf B}^T{\bf q}{\bf x}_k^T}{\sqrt{2\pi {\bf q}^T{\Sigma_{\bar w}}{\bf q}}}  \right]. \label{eqn:grad_JcK}
\end{equation}
Norm of the gradient of the Lagrangian function $\mathcal{L}(\bf K, \lambda)$ can be written as, 
\begin{align} 
&\text{tr}\left(\nabla_K\mathcal{L}({\bf K}, \lambda)^T \nabla_K\mathcal{L}({\bf K}, \lambda) \right) = \text{tr}\left(\nabla_K J({\bf K})^T \nabla_K J({\bf K}) \right) \nonumber \\
&+ \text{tr}\left(\lambda^2\nabla_K J_c({\bf K})^T \nabla_K J_c({\bf K}) + 2\lambda\nabla_K J({\bf K})^T\nabla_K J_c({\bf K})  \right). \nonumber \\
& \ge \text{tr}\left(\nabla_K J({\bf K})^T \nabla_K J({\bf K}) \right) + \text{tr}\left(2\lambda\nabla_K J({\bf K})^T\nabla_K J_c({\bf K})  \right) \nonumber \\
& \ge \text{tr}\left(\nabla_K J({\bf K})^T \nabla_K J({\bf K}) - 4\lambda {\bf E}_K{\Sigma_K}\text{E}\left[\frac{{\bf B}^T{\bf q}{\bf x}_k^T}{\sqrt{ 2\pi{\bf q}^T{\Sigma_{\bar w}}{\bf q}}} \right]  \right) \nonumber \\
&\text{[using (\ref{eqn:grad_JK}) and (\ref{eqn:grad_JcK}), and } 0\le \exp\left(-a^2({\bf x}_k, {\bf K}) /2\right) \le 1. \text{]} \nonumber \\
& \ge \text{tr}\left(\nabla_K J({\bf K})^T \nabla_K J({\bf K}) \right)\text{ [since } \text{E}[{\bf x}_k] = 0 \text{]}.
\label{eqn:grad_LK}
\end{align}
Combining (\ref{eqn:diff_JK2}) and (\ref{eqn:grad_LK}), we can write
\begin{align} 
    J({\bf K}) - J({{\bf K}}^*) \le \mu_1 \text{tr}\left(\nabla_K\mathcal{L}({\bf K}, \lambda)^T \nabla_K\mathcal{L}({\bf K}, \lambda) \right).
\end{align}
Additionally, from Lemma C.6 \cite{yangProvablyGlobalConvergence2019}, we can say $\nabla_K\mathcal{L}({\bf K}, \lambda)^T \nabla_K\mathcal{L}({\bf K}, \lambda)$ is lower bounded away from $0$ by $\sigma(\Sigma_w)||{\bf R} + {\bf B}^T{\bf P}_K{\bf B}||^{-1}\text{tr}({\bf E}_K^T{\bf E}_K)$. Since $\mid J_c({\bf K}) - J_c({\bf K^*}) \mid \le 1$, there will exist a sufficiently large $\mu \ge \mu_1$, such that (we note that $\mu$ is dependent on $\lambda$)
\begin{align} 
    &\mathcal{L}({\bf K}, \lambda) - \mathcal{L}({\bf K}^*, \lambda) = J({\bf K}) - J({{\bf K}}^*) + \lambda (J_c({\bf K}) - J_c({\bf K^*})) \nonumber \\
    & \le \mu \text{tr}\left(\nabla_K\mathcal{L}({\bf K}, \lambda)^T \nabla_K\mathcal{L}({\bf K}, \lambda) \right).
\end{align}
This completes the proof of Lemma~\ref{lemma:grad_dom}.

\section{Proof of Lemma~\ref{lemma:conv_rate}} \label{apdx:convergence_rate}
Here, we will prove Lemma~\ref{lemma:conv_rate} for the NPG algorithm. The update rule for the policy parameter ${\bf K}$ under the NPG algorithm is given by (\ref{eqn:K_update}).

From the L-smoothness property of $\mathcal{L}({\bf K},\lambda)$ as given in Lemma~\ref{lemma:Lsmooth}, we can write the following inequality \cite{huTheoreticalFoundationPolicy2023},
\begin{align}
    &\mathcal{L}({\bf K}^{'}, \lambda) - \mathcal{L}({\bf K}, \lambda) \le \nonumber \\
    &\text{tr}(\nabla_K\mathcal{L}({\bf K}, \lambda)^T({\bf K}^{'} - {\bf K})) + \frac{L}{2} \mid \mid {\bf K}^{'} - {\bf K} \mid \mid^2, \label{eqn:grad_dom2}
\end{align}
Using (\ref{eqn:K_update}) in (\ref{eqn:grad_dom2}), we can write
\begin{equation}
\begin{aligned}
    &\mathcal{L}({\bf K}^{'}, \lambda) - \mathcal{L}({\bf K}, \lambda) \le  -\text{tr}\left(\alpha {\Sigma_K}^{-1} \right.\\
    & \left. - \frac{L\alpha^2}{2}({\Sigma_K}^{-1})({\Sigma_K}^{-1})^T \right)\mid \mid \nabla_K \mathcal{L}({\bf K}, \lambda)  \mid \mid^2. \label{eqn:grad_dom3}
\end{aligned}
\end{equation}
We have used the matrix trace inequality as given in Theorem~1 from \cite{coope1994matrix} to get (\ref{eqn:grad_dom3}). To ensure convergence, we need the trace in (\ref{eqn:grad_dom3})) to be strictly positive. In other words, the step size $\alpha$ should be $\alpha < \frac{2}{L\text{tr}(\Sigma_K^{-1})}$.
Since $\Sigma_K$ is the solution to the Lyapunov equation (\ref{eqn:Sigma_K}), we can say $\text{tr}(\Sigma_K)$ is upper bounded by a finite constant, so there exists a constant $0<C_{\Sigma} < \text{tr}(\Sigma_K^{-1})$. Therefore, we can write an upper limit for $\alpha < \frac{2}{LC_{\Sigma}}$, which is independent of $\bf K$.

Applying the gradient dominance property of $\mathcal{L}({\bf K}, \lambda)$ as given in Lemma~\ref{lemma:grad_dom}, we can write 
\begin{align}
    &\mathcal{L}({\bf K}^{'}, \lambda) - \mathcal{L}({\bf K}^*, \lambda) \le \beta (\mathcal{L}({\bf K}, \lambda) - \mathcal{L}({\bf K}^*, \lambda)) \text{, where} \label{eqn:grad_dom4} \\
    &\beta =1- \frac{1}{\mu}\text{tr}\left(\alpha {\Sigma_K}^{-1} - \frac{L\alpha^2}{2}({\Sigma_K}^{-1})({\Sigma_K}^{-1})^T \right) \label{eqn:phi}
\end{align}
Since we need $0<\beta < 1$ for convergence, the step size $\alpha$ should satisfy the following condition
\begin{align}
    &0 < \frac{1}{\mu}\text{tr}\left(\alpha {\Sigma_K}^{-1} - \frac{L\alpha^2}{2}({\Sigma_K}^{-1})({\Sigma_K}^{-1})^T \right) < 1 \nonumber \\
    &=> \frac{1}{\mu}\text{tr}\left(\alpha {\Sigma_K}^{-1} - \frac{L\alpha^2}{2}({\Sigma_K}^{-1})({\Sigma_K}^{-1})^T \right) < 1 \text{ [if $\alpha < \frac{2}{LC_{\Sigma}}$]} \nonumber \\
    &=> \frac{1}{\mu}\text{tr}\left(\frac{\alpha}{\sigma_{\min}(\Sigma_{\bar w})} - \frac{L\alpha^2}{2}C_{\Sigma}^2 \right) < 1 \text{, [(\ref{eqn:Sigma_K}) used]}
    \label{eqn:alpha_bound2}
\end{align}
Here, $\sigma_{\min}(\cdot)$ denotes the lowest singular value. Note that if we choose $\alpha$ sufficiently small, condition (\ref{eqn:alpha_bound2}) can be satisfied. This completes the proof of Lemma~\ref{lemma:conv_rate}.

    \section{Proof of Lemma \ref{lemma:duality}} \label{apdx:duality}
    We follow the proof of Theorem 2 from \cite{zhaoGlobalConvergencePolicy2023a}. The proof contains two steps. 

    First, it is proved that there exists a $\lambda^* \triangleq \inf \left \{ \lambda \ge 0|J_c({\bf K}^*(\lambda)) \le \delta  \right \}$ such that $\lambda^* < \infty$. Although the constraint function differs in our case, we can utilize the same proof methodology as presented in \cite{zhaoGlobalConvergencePolicy2023a}, which relies on a contradiction argument employing Slater's condition. This proof does not rely on any specific formulation of the constraint function.

For the second step of the proof, we need to show that $\bf K^*(\lambda)$ and $J_c(\bf K^*(\lambda))$ are continuous functions of $\lambda$. We will prove this step in the following.  
We can directly say the gradient of the Lagrangian function $\mathcal{L}(\bf K, \lambda)$ with respect to $\bf K$ is a linear function of $\lambda$ for a fixed $\bf K$. Additionally, $\nabla_K \mathcal{L}(\bf K, \lambda)$ is continuous in $K \in \mathcal{K}$, see Lemma~\ref{lemma:twice_diff}. Therefore, the policy gradient steps, see Algorithm~\ref{algo:NPG}, will produce $\bf K_i$ that are continuous functions of $\lambda$. Finally, we have already proved that $\bf K_i \rightarrow \bf K^*$ as $i \rightarrow \infty$ in Lemma~\ref{lemma:grad_dom}. Therefore, we can say the optimal policy parameter $\bf K^*(\lambda)$ and the constraint function $J_c(\bf K^*(\lambda))$ are continuous functions of $\lambda$. This completes the proof of Lemma~\ref{lemma:duality}.

\bibliographystyle{IEEEtran}
\bibliography{IEEEabrv,Const_Control} 

\end{document}